\DeclareMathOperator*{\qq}{{\small q}}
\newcommand\independent{\protect\mathpalette{\protect\independenT}{\perp}}
\def\independenT#1#2{\mathrel{\rlap{$#1#2$}\mkern2mu{#1#2}}}
\newtheorem{prop}{Proposition}
\newtheorem{cor}{Corollary}
\newcommand{\RSWsum}{R_{\rm SW}^{\Sigma}}
\newcommand{\RKMsum}{R_{\rm KM}^{\Sigma}}
\newcommand{\RSsum}{R_{\rm S}^{\Sigma}}
\newcommand{\RSVsum}{R_{\rm SV}^{\Sigma}}
\newcommand{\RKMORsum}{R_{\rm KM-OR}^{\Sigma}}
\def\independenT#1#2{\mathrel{\rlap{$#1#2$}\mkern2mu{#1#2}}}
\begin{document}
\title{Distributed Structured Matrix Multiplication}

\author{
  \IEEEauthorblockN{Derya~Malak}
  \IEEEauthorblockA{Communication Systems Department,   EURECOM, 
                Biot Sophia Antipolis, FRANCE\\
                derya.malak@eurecom.fr}
\thanks{
Co-funded by the European Union (ERC, SENSIBILITÉ, 101077361). Views and opinions expressed are however those of the author only and do not necessarily reflect those of the European Union or the European Research Council. Neither the European Union nor the granting authority can be held responsible for them. 

This research was partially supported by a Huawei France-funded Chair towards Future Wireless Networks, and supported by the program ``PEPR Networks of the Future" of France 2030. 
}  
}

\maketitle

\begin{abstract}
We devise achievable encoding schemes for distributed source compression for computing inner products, symmetric matrix products, and more generally, square matrix products, which are a class of nonlinear transformations. To that end, our approach relies on devising nonlinear mappings of distributed sources, which are then followed by the structured linear encoding scheme, introduced by K\"orner and Marton. 
For different computation scenarios, we contrast our findings on the achievable sum rate with the state of the art to demonstrate the possible savings in compression rate. 
When the sources have special correlation structures, it is possible to achieve unbounded gains, as demonstrated by the analysis and numerical simulations.
\end{abstract}

\begin{IEEEkeywords}
Distributed computation, inner product, structured codes, matrix-vector multiplication, matrix multiplication.  
\end{IEEEkeywords}

\section{Introduction}
\label{sec:intro}
The inner product operation between two vectors captures the similarity between vectors and allows us to describe the lengths, angles, projections, vector norms, matrix norms induced by vector norms, orthogonality of vectors, polynomials, and a variety of other functions as well \cite{strang2022introduction}. 
Inner products are widely used in geometry and trigonometry using linear algebra and in applications spanning physics, engineering, and mathematics, e.g., to determine the convolution of functions \cite{strang2022introduction}, and the Fourier transform approximations, machine learning \cite{mousavi2019private} and pattern recognition \cite{bouscatie2023pattern}, e.g., the linear regression and the least squares models \cite{strang2022introduction}, and quantum computing, e.g., to describe the overlap between the two quantum states \cite{buhrman2001quantum}. 

With the advent of 
massive parallelization techniques and coded computation frameworks, modern distributed computing systems, e.g., MapReduce, 
Hadoop, and Spark, have been devised to implement the computationally intensive task of distributed matrix multiplication with low communication and computation cost \cite{tandon2017gradient}. To that end, novel coded matrix-multiplication constructions, e.g., polynomial codes \cite{yu2017polynomial}, \cite{dutta2019optimal}, gradient coding \cite{tandon2017gradient}, and Lagrange coded computing \cite{soleymani2021analog,YuRavSoAve2018}, have been designed to mitigate the costs, faulty nodes, and stragglers. The inner product computation serves as the primary building block of such settings. 

In this paper, we devise structured encoding schemes for distributed computing of inner products and symmetric and, more generally, square matrices via distributed matrix products. 
Our main contributions are summarized as follows:
\begin{itemize}
\item We devise a distributed encoding scheme that performs structured coding on nonlinear mappings of two distributed sources ${\bf A}$ and ${\bf B}$ to compute their inner product. 
\item We showcase the conditions for which the sum rate achieved by this structured coding is strictly less than the sum rate of distributed unstructured encoding of the sources ${\bf A}$ and ${\bf B}$ \cite{SlepWolf1973}. Here, the performance criterion is that both ${\bf A}$ and ${\bf B}$ cannot be decoded by the receiver. 
\item We derive achievable rate regions for distributed computation of symmetric and square matrices, a class of nonlinear transformations, with a vanishing probability of error, and determine example scenarios --- detailed in Corollaries~\ref{cor:innerproduct_length_m_binary}-\ref{cor:general_matrix_q3} --- with special correlation structures to guarantee, via the structured coding scheme of K\"orner and Marton \cite{korner1979encode}, significant savings over \cite{SlepWolf1973}.
\item We contrast the achievable rates with the existing approaches (e.g., \cite{SlepWolf1973,krithivasan2011distributed,pradhan2020algebraic}) via numerical examples.
\end{itemize}

{\bf Connections to the state of the art.} Slepian and Wolf have provided an unstructured coding technique for the asymptotic lossless compression of distributed source variables $X_1$ and $X_2$ at the minimum rate needed, i.e., $H(X_1,X_2)$ \cite{SlepWolf1973}. Han-Kobayashi \cite{han1987dichotomy} have provided a characterization to determine whether computing a general bivariate function $f(X_1,X_2)$ of two random sequences $\{X_{1i}\}$ and $\{X_{2i}\}$ from two correlated memoryless sources requires a smaller rate than $H(X_1,X_2)$. For distributed coding of a finite alphabet source $X$ with side information $Y$, Orlitsky and Roche have devised an unstructured coding scheme to achieve the minimum rate at which source $X$ has to compress for distributed computing of $f(X,Y)$ with vanishing error \cite{OR01}, exploiting K\"orner's characteristic graph $G_X$ and its entropy \cite{Kor73}. This scheme is equivalent to performing Slepian-Wolf encoding on the colors of the sufficiently large OR powers of $G_X$ given $Y$  \cite{feizi2014network,malak2022fractional,salehi2023achievable}.

K\"orner and Marton have devised a structured encoding strategy that minimizes the sum rate for distributed computing the modulo-two sum of doubly symmetric binary source (DSBS) sequences with a low probability of error \cite{korner1979encode}. Ahlswede and Han have tightened the rate region for general binary sources   \cite{ahlswede1983source} that embed the regions of \cite{korner1979encode} and \cite{SlepWolf1973}. The rate region for this problem has been extended to a larger class of source distributions \cite{nair2020optimal}, and for reconstructing the modulo-$q$ sum of the two sources in a $q$-ary prime finite field $\mathbb{F}_q$ at a sum rate of $2H(X_1 \oplus_q X_2)$ \cite{han1987dichotomy}. For computing a nonlinear function, the embedding of the function in a sufficiently large prime $\mathbb{F}_q$ \cite{krithivasan2011distributed}, \cite{pradhan2020algebraic}, and finding an injective mapping between the function and $X_1 \oplus_q X_2$, known as {\emph{structured binning}}, may provide savings over \cite{SlepWolf1973}. 

In this paper, leveraging these fundamental principles, we demonstrate further savings in compression for computing inner products as well as matrix products through devising nonlinear mappings of the sources followed by the linear encoding scheme in \cite{korner1979encode}, while requiring a smaller prime field size versus \cite{krithivasan2011distributed}, \cite{pradhan2020algebraic}.

{\bf Notation.} We denote by $H(X)$ the Shannon entropy of a discrete random variable $X$, which is drawn from probability mass function (PMF) $P_{X}$. Similarly, $H(\cdot,\cdot)$ and $H(\cdot\,\vert\, \cdot)$ denote the joint and conditional entropies, given a joint PMF $P_{X_1,X_2}$. Let $h(\epsilon)$ denote the binary entropy function for Bernoulli distributed $X$ with parameter $\epsilon\in [0,1]$, i.e., $X\sim {\rm Bern}(\epsilon)$. 

We denote by $X^n$ the length $n$ realization of $X$. The boldface notation ${\bf X}$ denotes a random matrix with elements in $\mathbb{F}_q$ and ${\bf X}^{\intercal}$ is its transpose. ${\bm 1}_m$ and ${\bm 0}_m$ are length $m$ all-ones and all-zeros vectors, respectively. $\mathbb{P}(A)$ is the probability of an event $A$, and  $1_{x\in A}$ is the indicator function which takes the value $1$ if $x\in A$, and $0$ otherwise. The inner product of two vectors in the vector space $V$ over a field $F$ is a scalar, which is a map $\langle \cdot ,\cdot \rangle :V\times V\to F$.

\section{System Model and Main Results}
\label{sec:mainresults}
We consider a distributed coding scenario with two sources and a receiver. The two sources are assigned matrix variables ${\bf A}\in\mathbb{F}_q^{m\times l}$ and ${\bf B}\in\mathbb{F}_q^{m\times l}$, respectively, that model two statistically dependent independent and identically distributed (i.i.d.) finite alphabet source matrix 
sequences with entries from a field whose characteristic is $q\geq 2$. The objective of the receiver is to compute a function $f({\bf A},{\bf B})$, which is the matrix product of ${\bf A}$ and ${\bf B}$, i.e., ${\bf A}^{\intercal}{\bf B} :\mathbb{F}_q^{m\times l}\times \mathbb{F}_q^{m\times l}\to \mathbb{F}_q^{l\times l}$. 
To that end, we devise a distributed encoding scheme that performs structured coding on nonlinear source mappings. 

We demonstrate different regimes where the sum rate achieved by {\emph{structured coding}} is strictly less than $H({\bf A},{\bf B})$ \cite{SlepWolf1973} so that the receiver can recover ${\bf A}^{\intercal}{\bf B}$, 
while it is desired that sources ${\bf A}$ and ${\bf B}$ cannot be decoded by the receiver, implying a security constraint. 
Exploiting \cite{korner1979encode}, we first study the special scenario of distributed computing of inner products, where the key idea is to implement linear structured coding of \cite{korner1979encode} on vector-wise embeddings of the sources   (Propositions~\ref{prop:KW_sum_rate_for_inner_product} and~\ref{prop:KW_sum_rate_for_inner_product_linear_source_mappings}), and propose a hybrid scheme that combines \cite{korner1979encode} with the {\emph{unstructured coding}} technique of \cite{OR01}. We contrast the achievable rate savings over unstructured coding \cite{SlepWolf1973}, and over linear embedding of ${\bf A}^{\intercal}{\bf B}$ in a prime finite field \cite{krithivasan2011distributed} and \cite{pradhan2020algebraic}. We generalize our scheme for computing symmetric matrices via distributed multiplication of the sources without being able to recover the sources in their entirety (Proposition~\ref{prop:KW_sum_rate_for_symmetric_matrix_product}) and provide a necessary condition for not being able to recover ${\bf A}$ and ${\bf B}$ (Proposition~\ref{prop:KW_sum_rate_for_inner_product+vs_SW_sum_rate}). Finally, we consider the distributed computation of square matrices (Proposition~\ref{prop:KW_sum_rate_for_general_matrix_product} and Corollary~\ref{cor:general_matrix_q3}). 
We showcase the compression performance of our schemes versus the existing results, demonstrating the achievable savings.

We next detail our main achievability results.

\subsection{Distributed Computation of Inner Products of Sources}
\label{sec:achievability_results_inner_product}
The distributed sources hold the even-length vector variables ${\bf A}=\begin{bmatrix}a_1 & a_2 & \hdots & a_m\end{bmatrix}^{\intercal}$ and ${\bf B}=\begin{bmatrix}b_1 & b_2 & \hdots & b_m\end{bmatrix}^{\intercal}$, respectively, with entries chosen from a prime field $\mathbb{F}_q$. The receiver aims to compute the inner product $f({\bf A},{\bf B})=\langle \,{\bf A} ,{\bf B} \,\rangle$.

We next present an achievable coding scheme for distributed computation of $\langle \,{\bf A} ,{\bf B} \,\rangle=\sum\nolimits_{i=1}^{m} a_i b_i$. The coding scheme relies on devising nonlinear mappings from each source and using the linear encoding scheme of K\"orner-Marton in \cite{korner1979encode}.

\begin{prop}
\label{prop:KW_sum_rate_for_inner_product}
{\bf (Distributed inner product computation.)}
Given two sequences of random vectors ${\bf A}=\begin{bmatrix}{\bf A}_1^{\intercal}&{\bf A}_2^{\intercal}\end{bmatrix}^{\intercal}\in\mathbb{F}_q^{m\times 1}$ and ${\bf B}=\begin{bmatrix}{\bf B}_1^{\intercal}&{\bf B}_2^{\intercal}\end{bmatrix}^{\intercal}\in\mathbb{F}_q^{m\times 1}$ of even length $m$, generated by two correlated memoryless $q$-ary sources, where ${\bf A}_1,{\bf A}_2,{\bf B}_1,{\bf B}_2\in \mathbb{F}_q^{m/2\times 1}$, the following sum rate is achievable by the separate encoding of the sources for the receiver to recover $\langle \,{\bf A},{\bf B} \,\rangle$ with a small probability of error: 
\begin{align}
\label{KW_sum_rate_for_inner_product}
\RKMsum = 2H({\bf U},\, {\bf V}, \, W) \ ,
\end{align}
where ${\bf U}, {\bf V}\in \mathbb{F}_q^{m/2\times 1}$ are vector variables, and $W\in \mathbb{F}_q$ is a random variable, and they satisfy the following relations:
\begin{align}
\label{UVW}
{\bf U}&={\bf A}_2\oplus_q{\bf B}_1 \ ,\nonumber\\
{\bf V}&={\bf A}_1\oplus_q{\bf B}_2 \ , \nonumber\\
W&={\bf A}_2^{\intercal} {\bf A}_1\oplus_q{\bf B}_1^{\intercal} {\bf B}_2 \ .
\end{align}
\end{prop}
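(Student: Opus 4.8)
The plan is to convert the (quadratic) inner-product computation into a \emph{modular sum-recovery} problem and then invoke the K\"orner--Marton structured linear code \cite{korner1979encode}, in its $\mathbb{F}_q$ form \cite{han1987dichotomy}. Ahead of the linear encoder I would let each source apply a deterministic pre-mapping that stacks the two halves of its observation together with the relevant cross term. Source $1$, observing ${\bf A}=\begin{bmatrix}{\bf A}_1^{\intercal}&{\bf A}_2^{\intercal}\end{bmatrix}^{\intercal}$, forms the length-$(m+1)$ vector over $\mathbb{F}_q$
\begin{align}
\label{pf:S1}
\tilde{\bf S}_1 \coloneqq \begin{bmatrix}{\bf A}_2^{\intercal} & {\bf A}_1^{\intercal} & {\bf A}_2^{\intercal}{\bf A}_1\end{bmatrix}^{\intercal},
\end{align}
and source $2$, observing ${\bf B}=\begin{bmatrix}{\bf B}_1^{\intercal}&{\bf B}_2^{\intercal}\end{bmatrix}^{\intercal}$, forms
\begin{align}
\label{pf:S2}
\tilde{\bf S}_2 \coloneqq \begin{bmatrix}{\bf B}_1^{\intercal} & {\bf B}_2^{\intercal} & {\bf B}_1^{\intercal}{\bf B}_2\end{bmatrix}^{\intercal}.
\end{align}
By inspection, the componentwise modulo-$q$ sum equals $\tilde{\bf S}_1\oplus_q\tilde{\bf S}_2=\begin{bmatrix}{\bf U}^{\intercal}&{\bf V}^{\intercal}&W\end{bmatrix}^{\intercal}$ with ${\bf U},{\bf V},W$ as in \eqref{UVW}; and since the pair $({\bf A},{\bf B})$ is i.i.d.\ across the $n$ uses, so are $(\tilde{\bf S}_1,\tilde{\bf S}_2)$ and hence $({\bf U},{\bf V},W)$.

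The decodability step is to show that the target is a deterministic function of this modular sum. Using $\langle{\bf A},{\bf B}\rangle={\bf A}^{\intercal}{\bf B}={\bf A}_1^{\intercal}{\bf B}_1\oplus_q{\bf A}_2^{\intercal}{\bf B}_2$ and that a $1\times 1$ product equals its transpose,
\begin{align}
\label{pf:expand}
{\bf V}^{\intercal}{\bf U}
&=({\bf A}_1\oplus_q{\bf B}_2)^{\intercal}({\bf A}_2\oplus_q{\bf B}_1)\nonumber\\
&=\big({\bf A}_1^{\intercal}{\bf B}_1\oplus_q{\bf A}_2^{\intercal}{\bf B}_2\big)\oplus_q\big({\bf A}_2^{\intercal}{\bf A}_1\oplus_q{\bf B}_1^{\intercal}{\bf B}_2\big)\nonumber\\
&=\langle{\bf A},{\bf B}\rangle\oplus_q W .
\end{align}
Hence the receiver recovers $\langle{\bf A},{\bf B}\rangle$ from $({\bf U},{\bf V},W)$ without residual ambiguity, by forming ${\bf V}^{\intercal}{\bf U}$ and subtracting $W$ in $\mathbb{F}_q$; it therefore suffices to deliver the block sequence $(\tilde{\bf S}_1\oplus_q\tilde{\bf S}_2)^n$ to the receiver.

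For that, I would run K\"orner--Marton over $\mathbb{F}_q$ on the pre-mapped blocks: both encoders use one common random $\mathbb{F}_q$-linear map (a ``parity-check'' matrix) acting on the $n(m+1)$ field symbols of their pre-mapped block sequence, of normalized rate arbitrarily close to $H({\bf U},{\bf V},W)$, and transmit the images of $\tilde{\bf S}_1^n$ and $\tilde{\bf S}_2^n$ under it. By $\mathbb{F}_q$-linearity the receiver adds the two received images to obtain the syndrome of ${\bf Z}^n\coloneqq(\tilde{\bf S}_1^n\oplus_q\tilde{\bf S}_2^n)$, and decodes it as the unique jointly typical sequence (for $P_{{\bf U},{\bf V},W}$) with that syndrome. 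Since ${\bf Z}^n$ is i.i.d.\ $\sim P_{{\bf U},{\bf V},W}$ it is typical with high probability, and a union bound over the typical set shows that for any rate exceeding $H({\bf U},{\bf V},W)$ it is the unique typical preimage of its syndrome with probability $\to 1$; so ${\bf Z}^n$, and therefore $\langle{\bf A},{\bf B}\rangle$ at every coordinate via \eqref{pf:expand}, is recovered with vanishing error. Each encoder then spends rate $H({\bf U},{\bf V},W)$, giving $\RKMsum=2H({\bf U},{\bf V},W)$ as in \eqref{KW_sum_rate_for_inner_product}.

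The coding step is the by-now standard linear Slepian--Wolf-for-a-sum argument and needs no assumption on $P_{{\bf A},{\bf B}}$; the real content, and the main obstacle, is the choice of pre-mapping. One has to split ${\bf A}=[{\bf A}_1;{\bf A}_2]$ and ${\bf B}=[{\bf B}_1;{\bf B}_2]$ and adjoin exactly the cross terms ${\bf A}_2^{\intercal}{\bf A}_1$ and ${\bf B}_1^{\intercal}{\bf B}_2$ so that the quadratic form $\langle{\bf A},{\bf B}\rangle$ collapses, via \eqref{pf:expand}, to an expression that is affine in the componentwise $\mathbb{F}_q$-sum $({\bf U},{\bf V},W)$; this is precisely what allows a \emph{linear} code over the \emph{small} field $\mathbb{F}_q$ to suffice. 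A secondary bookkeeping point is that the $(m+1)$-symbol pre-mapped blocks must be encoded jointly (not coordinatewise), which is what yields the joint entropy $H({\bf U},{\bf V},W)$ in \eqref{KW_sum_rate_for_inner_product} rather than a sum of marginal entropies.
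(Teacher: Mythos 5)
Your proposal is correct and follows essentially the same route as the paper's proof: the identical nonlinear pre-mappings $[{\bf A}_2;{\bf A}_1;{\bf A}_2^{\intercal}{\bf A}_1]$ and $[{\bf B}_1;{\bf B}_2;{\bf B}_1^{\intercal}{\bf B}_2]$, the same algebraic identity recovering $\langle{\bf A},{\bf B}\rangle$ from $({\bf U},{\bf V},W)$ as ${\bf V}^{\intercal}{\bf U}\ominus_q W$, and the same K\"orner--Marton common-linear-code step delivering the modulo-$q$ sum of the pre-mapped blocks at sum rate $2H({\bf U},{\bf V},W)$. The only cosmetic difference is that you argue the linear-coding step via random syndromes and joint typicality over $\mathbb{F}_q$, whereas the paper invokes Elias's lemma for $q=2$ and cites the standard extension to $q>2$; both are equivalent instantiations of the same achievability argument.
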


\begin{proof}
See Appendix~\ref{App:prop:KW_sum_rate_for_inner_product}.
\end{proof}

\begin{figure}[t!]
\centering
\includegraphics[width=0.4\textwidth]{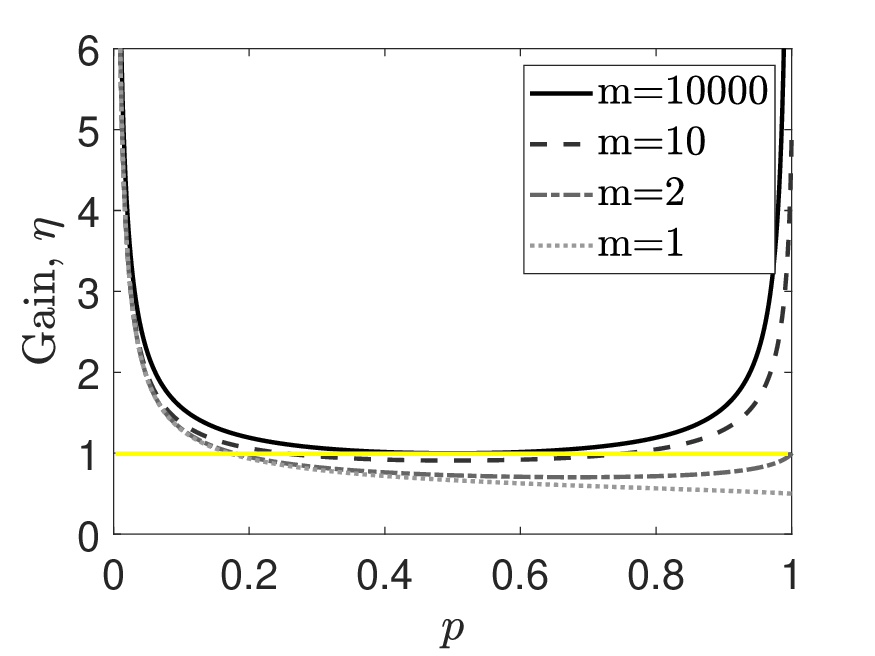}
\caption{Gain, $\eta$ (cf. (\ref{gain_DSBS_channel_KM}) in Corollary~\ref{cor:innerproduct_length_m_binary}. The flat (yellow) line marks $\eta=1$.}
\label{fig:distributedinnerproduct}
\end{figure}

In the scheme of Proposition~\ref{prop:KW_sum_rate_for_inner_product}, the receiver may not recover ${\bf A}$ and ${\bf B}$ in their entirety, from the modulo-$q$ additions ${\bf U}$, ${\bf V}$, and $W$, enabling secure distributed inner product computation. 

We next contrast the sum rate for computing $\langle \,{\bf A} ,{\bf B} \,\rangle$ given by Proposition~\ref{prop:KW_sum_rate_for_inner_product} with the sum rate of Slepian-Wolf in \cite{SlepWolf1973} to demonstrate the gap between $\RKMsum$ and $\RSWsum$. The following result indicates that the rate needed to compute $\langle \,{\bf A} ,{\bf B} \,\rangle$ may be substantially less than $H({\bf A}, {\bf B})$ for binary-valued sources.

\begin{cor}
\label{cor:innerproduct_length_m_binary}
Consider two sequences of correlated source vectors ${\bf A}\in\mathbb{F}_2^{m\times 1}$ and ${\bf B}\in\mathbb{F}_2^{m\times 1}$, with entries $a_{i}$ and $b_{i}$ that are i.i.d. across $i=1,\dots,m$ each, where $a_{i}\sim {\rm Bern}(1/2)$ and $b_{i}\sim {\rm Bern}(1/2)$ are correlated for a given $i=1,\dots,m$. Assume that ${\bf U}$ and ${\bf V}$, as defined in (\ref{UVW}), have entries $u_i, v_i \sim {\rm Bern}(p)$, that are i.i.d. across $i=1,\dots, m/2$. 

For this setting, the gain of the encoding technique in Proposition~\ref{prop:KW_sum_rate_for_inner_product} with a sum rate $\RKMsum$ given in (\ref{KW_sum_rate_for_inner_product}) over the sum rate $\RSWsum$ for lossless compression of the sources is
\begin{align}
\label{gain_DSBS_channel_KM}
\eta=\frac{\RSWsum}{\RKMsum}=\frac{m(1+h(p))}{2mh(p)+2(1-(1-p)^m)} \ .
\end{align}
\end{cor}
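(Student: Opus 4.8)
The plan is to evaluate the two sum rates appearing in $\eta$ separately under the binary model of the corollary and then form their ratio. The numerator is the Slepian--Wolf sum rate $\RSWsum=H({\bf A},{\bf B})$, and the denominator is $\RKMsum=2H({\bf U},{\bf V},W)$ supplied by Proposition~\ref{prop:KW_sum_rate_for_inner_product}. Beyond the explicit hypotheses I will use the natural reading of the model as a doubly symmetric binary source: writing ${\bf B}_1={\bf A}_2\oplus{\bf U}$ and ${\bf B}_2={\bf A}_1\oplus{\bf V}$ (consistent with (\ref{UVW})), the innovation vectors ${\bf U},{\bf V}$ are independent of ${\bf A}$ and jointly carry $m$ i.i.d.\ ${\rm Bern}(p)$ coordinates, while ${\bf A}_1,{\bf A}_2$ carry $m$ i.i.d.\ ${\rm Bern}(1/2)$ coordinates.

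For the numerator I would note that the map $({\bf A},{\bf B})\mapsto({\bf A}_1,{\bf A}_2,{\bf U},{\bf V})$ is a bijection, so $H({\bf A},{\bf B})=H({\bf A}_1,{\bf A}_2,{\bf U},{\bf V})$; since these four blocks are mutually independent with entropies $m/2$, $m/2$, $(m/2)h(p)$, $(m/2)h(p)$, this gives $\RSWsum=m\bigl(1+h(p)\bigr)$, the numerator of (\ref{gain_DSBS_channel_KM}).

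For the denominator I would split $H({\bf U},{\bf V},W)=H({\bf U},{\bf V})+H(W\mid{\bf U},{\bf V})$, with $H({\bf U},{\bf V})=mh(p)$ immediate from the i.i.d.\ ${\rm Bern}(p)$ assumption. The key step is $H(W\mid{\bf U},{\bf V})$. Substituting ${\bf B}_1={\bf A}_2\oplus{\bf U}$ and ${\bf B}_2={\bf A}_1\oplus{\bf V}$ into $W={\bf A}_2^{\intercal}{\bf A}_1\oplus{\bf B}_1^{\intercal}{\bf B}_2$ and expanding over $\mathbb{F}_2$, the quadratic term ${\bf A}_2^{\intercal}{\bf A}_1$ cancels and one is left with the affine form $W={\bf A}_1^{\intercal}{\bf U}\oplus{\bf A}_2^{\intercal}{\bf V}\oplus{\bf U}^{\intercal}{\bf V}$. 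Conditioned on $({\bf U},{\bf V})=({\bf u},{\bf v})$, the vectors ${\bf A}_1,{\bf A}_2$ remain uniform and independent, so $W$ is an affine image of independent fair bits: it is deterministically $0$ when ${\bf u}={\bf v}={\bm 0}_{m/2}$ (then ${\bf A}_2={\bf B}_1$, ${\bf A}_1={\bf B}_2$, and $W$ vanishes identically), and it is uniform on $\mathbb{F}_2$ for every other $({\bf u},{\bf v})$ (any index with $u_i=1$ or $v_i=1$ injects an independent fair bit). Hence $H(W\mid{\bf U},{\bf V})=\Pro{({\bf U},{\bf V})\neq({\bm 0}_{m/2},{\bm 0}_{m/2})}=1-(1-p)^m$, so $\RKMsum=2mh(p)+2\bigl(1-(1-p)^m\bigr)$, and dividing the two expressions yields (\ref{gain_DSBS_channel_KM}).

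I expect the only delicate points to be: (i) the $\mathbb{F}_2$ expansion reducing $W$ to its affine form, where one must track that the mixed terms telescope and that the purely ${\bf A}$-quadratic term drops out; and (ii) the conditional-uniformity dichotomy for $W$ — identifying ${\bf U}={\bf V}={\bm 0}_{m/2}$ as the unique degenerate conditioning event and evaluating its probability as $(1-p)^m$ via the i.i.d.\ ${\rm Bern}(p)$ assumption on the $m$ coordinates of $({\bf U},{\bf V})$. The remaining steps are routine chain-rule bookkeeping together with the independence and uniformity built into the model.
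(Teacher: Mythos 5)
Your proposal is correct and follows essentially the same route as the paper: both reduce $\RKMsum$ via $H({\bf U},{\bf V},W)=H({\bf U},{\bf V})+H(W\,\vert\,{\bf U},{\bf V})$, observe that conditioned on $({\bf U},{\bf V})$ the residual $W$ is a linear form in the fair bits of ${\bf A}$ (uniform unless ${\bf U}={\bf V}={\bm 0}$), yielding $1-(1-p)^m$, and evaluate $\RSWsum=m(1+h(p))$ from the DSBS-pair structure. The only cosmetic difference is that the paper obtains $H(W\,\vert\,{\bf U},{\bf V})$ through an explicit binomial sum over the weight of ${\bf Q}$, which collapses to exactly your zero/nonzero dichotomy, and both arguments rely on the same implicit independence of $({\bf U},{\bf V})$ from ${\bf A}$ that you state up front.
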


\begin{proof}
The proof follows from evaluating (\ref{KW_sum_rate_for_inner_product}) and contrasting it with $\RSWsum=H({\bf A},{\bf B})=m(1+h(p))$. For details, we refer the reader to Appendix~\ref{App:cor:innerproduct_length_m_binary}.
\end{proof}

Note from Corollary~\ref{cor:innerproduct_length_m_binary} that when $\eta>1$, the receiver can compute $\langle \,{\bf A} ,{\bf B} \,\rangle$ without recovering $({\bf A},{\bf B})$. It holds that
\begin{align}
\lim\limits_{p\to 0}\eta =\infty \ ,\quad \lim\limits_{p\to 1}\eta =\frac{m}{2} \ ,\quad \lim\limits_{m\to \infty}\eta= \frac{1+h(p)}{2h(p)} \ , 
\end{align}
where the limit $\lim\limits_{m\to \infty}\eta$ is the same as the gain for the DSBS model studied in \cite{korner1979encode}, which tends to infinity as $p\to\{0,1\}$.

\begin{figure*}[t!]
\centering
\includegraphics[width=0.325\textwidth]{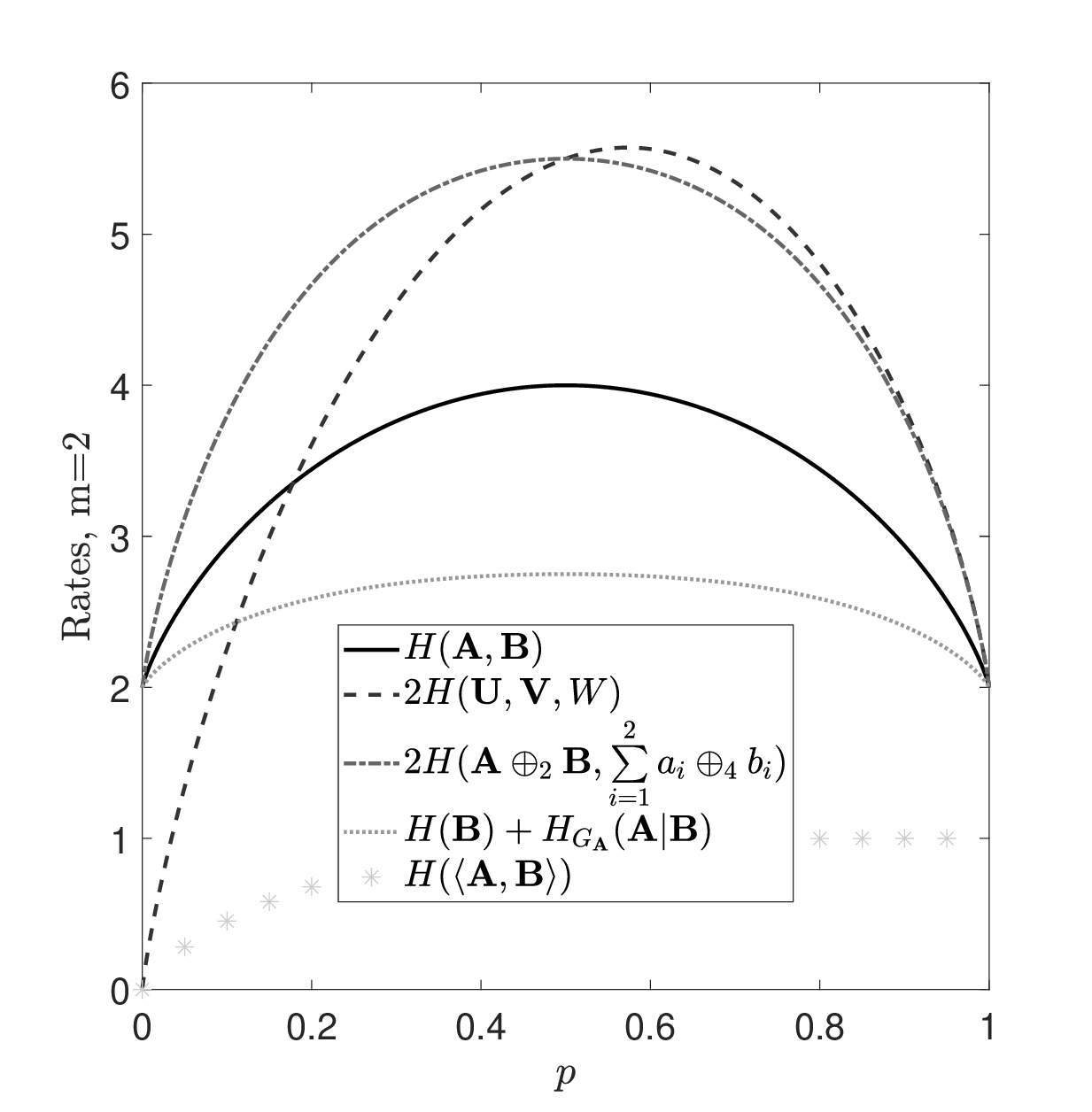}
\includegraphics[width=0.325\textwidth]{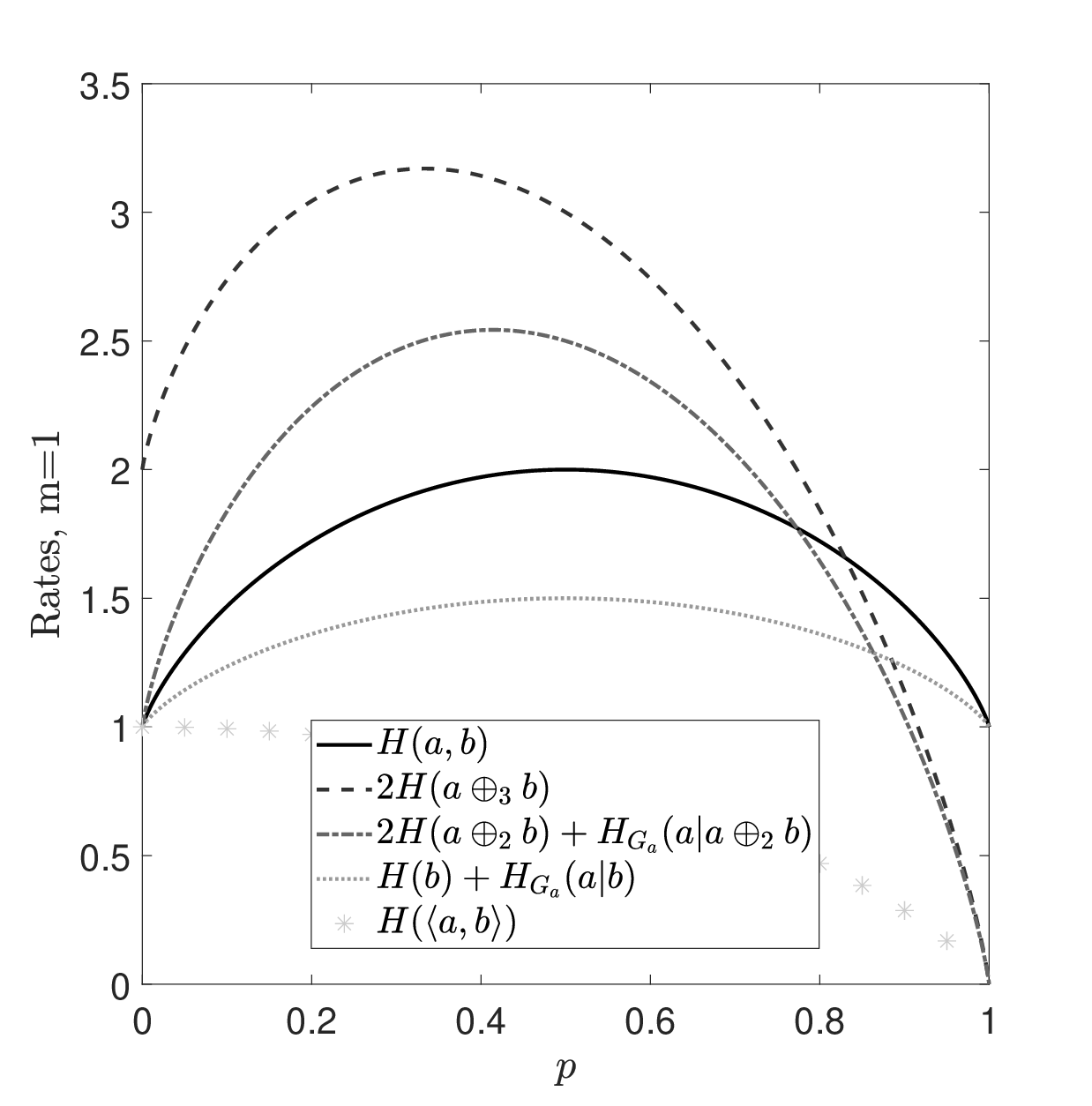}
\includegraphics[width=0.325\textwidth]{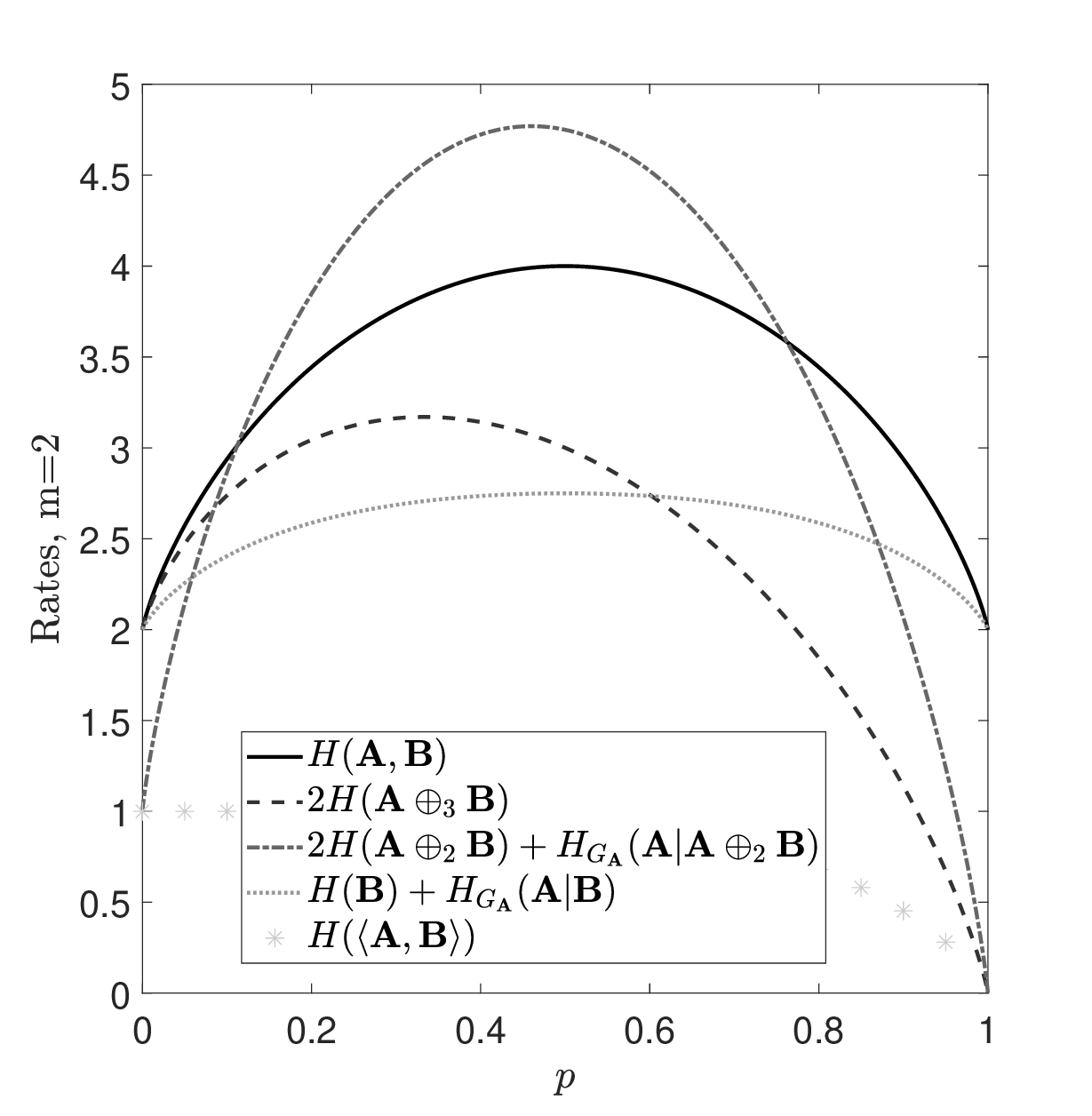}
\caption{Rate comparisons for various source PMFs. (Left) Corollary~\ref{cor:innerproduct_length_m_binary} for $m=2$. 
(Middle) $m=1$, and $a,b$ are DSBSs. (Right) $m=2$, and $\{a_i,b_i\}_{i=1}^2$ are DSBSs.}
\label{fig:innerproducts_general_comparison}
\end{figure*}

We illustrate the gain $\eta$ as a function of $(m,p)$ in Figure~\ref{fig:distributedinnerproduct}, indicating that $\RKMsum$ may be substantially less than the joint entropy of the sources for this special class of source PMFs.

It has been shown in \cite{krithivasan2011distributed}, \cite{pradhan2020algebraic} that via embedding the nonlinear function $D_k=a_kb_k$, where $a_k,b_k\in\mathbb{F}_q$, in a sufficiently large prime $\mathbb{F}_q$, the decoder can reconstruct $\tilde{D}^n=\{a_k\oplus_q b_k\}_{k=1}^n$, and hence, compute $D^n=\{a_kb_k\}_{k=1}^n$ with high probability. For instance, if $a_k,b_k\in\mathbb{F}_2$, we can reconstruct $D^n$ from $\tilde{D}^n=\{a_k\oplus_3 b_k\}_{k=1}^n$ using a sum rate of $\RSsum=2H({\bf A}\oplus_3{\bf B})$. 

Motivated by the notion of embedding in \cite{krithivasan2011distributed} and \cite{pradhan2020algebraic}, we next devise an achievability scheme for computing $\langle \,{\bf A},{\bf B} \,\rangle$, where the key idea is to compress the vector-wise embeddings of the sources vectors ${\bf A}$ and ${\bf B}$ via employing the linear structured encoding scheme of \cite{korner1979encode}, in contrast to entry-wise embeddings that require a sum rate of $\RSsum$ (cf.~\cite{krithivasan2011distributed} and \cite{pradhan2020algebraic}).

\begin{prop}
\label{prop:KW_sum_rate_for_inner_product_linear_source_mappings}
{\bf (Vector-wise embeddings followed by linear encoding  for distributed computation of $\langle \,{\bf A} ,{\bf B} \,\rangle$.)}
Given two sequences of vectors ${\bf A}\in\mathbb{F}_q^{m\times 1}$ and ${\bf B}\in\mathbb{F}_q^{m\times 1}$, generated by two correlated memoryless sources, with entries from a field $\mathbb{F}_q$ with $q>2$, restricting the source mappings to be linear, the following sum rate is achievable via the K\"orner-Marton's scheme to recover $\langle \,{\bf A} ,{\bf B} \,\rangle$ at the receiver with a small probability of error: 
\begin{align}
\label{linear_source_mapping_linear_encoding_qary}
\RSVsum=2H\Big(\Big\{ a_i\oplus_{r} b_i\Big\}_{i=1}^m,\, {\bigoplus_{i=1}^m} {_{_{{\qq}}} \, a_i^2\oplus_q b_i^2} \Big) \ ,
\end{align}
where $r=2(q-1)m$ for $m$ even, and $r=2(q-1)m+1$ for $m$ odd, respectively, and ${\bigoplus _{_{{\qq}}}}$ denotes a modulo-$q$ addition. 
\end{prop}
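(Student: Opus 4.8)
The plan is to reduce the inner product to a modulo-$r$ computation that can be handled by the Körner--Marton scheme, analogous to the construction in Proposition~\ref{prop:KW_sum_rate_for_inner_product} but now insisting that the source mappings be \emph{linear} over a $q$-ary field. First I would recall the polarization identity $\langle {\bf A},{\bf B}\rangle = \tfrac12\big(\norm{{\bf A}\oplus{\bf B}}^2 - \norm{{\bf A}}^2 - \norm{{\bf B}}^2\big)$ in spirit: the key observation is that $2 a_i b_i = (a_i\oplus b_i)^2 - a_i^2 - b_i^2$ when we work over a field in which $2$ is invertible (hence the requirement $q>2$). Thus $2\langle {\bf A},{\bf B}\rangle = \sum_{i=1}^m (a_i\oplus b_i)^2 - \sum_{i=1}^m a_i^2 - \sum_{i=1}^m b_i^2$, so that recovering $\langle {\bf A},{\bf B}\rangle$ is equivalent to recovering the pair consisting of the vector of componentwise sums $\{a_i\oplus b_i\}_{i=1}^m$ and the scalar $\bigoplus_{i,q} a_i^2 \oplus b_i^2$ (the sum of squares), since from those two quantities the decoder forms $\sum_i (a_i\oplus b_i)^2$, subtracts $\bigoplus_{i,q} a_i^2\oplus b_i^2$, and divides by $2$.

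Second I would argue the sizing of the intermediate modulus. The Körner--Marton scheme delivers to the decoder the modulo-$r$ sum of two $r$-ary source sequences at sum rate $2H$ of that sum, and here the two ``sources'' are the linear maps $a_i\mapsto a_i$ and $b_i\mapsto b_i$ for the vector part, together with the quadratic-looking but componentwise-additive quantities $a_i^2$ and $b_i^2$ embedded in $\mathbb{F}_q$ for the scalar part. To guarantee that the modulo-$r$ addition $\{a_i \oplus_r b_i\}$ does not wrap around, so that the true integer sums $a_i+b_i$ are recovered, $r$ must exceed the largest possible value of $a_i+b_i$; with $a_i,b_i\in\{0,\dots,q-1\}$ and $m$ of them this forces $r \ge 2(q-1)m$ (with the parity correction $+1$ for odd $m$ coming from ensuring $r$ is chosen so the subsequent arithmetic — in particular the division by $2$ and the modular reduction back to $\mathbb{F}_q$ for the squared terms — is consistent; this parity detail is exactly what the statement's case split records). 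The scalar sum of squares is already reduced modulo $q$ in the statement, so only the vector part needs the enlarged modulus, and the joint entropy $H(\{a_i\oplus_r b_i\}_{i=1}^m, \bigoplus_{i,q} a_i^2\oplus_q b_i^2)$ is what the linear scheme pays for, giving the claimed $\RSVsum$.

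Third, I would invoke Körner--Marton \cite{korner1979encode} — or more precisely its $q$-ary generalization used in \cite{han1987dichotomy,krithivasan2011distributed} — to conclude that the sum $2H(\cdot,\cdot)$ of the relevant modular combination is achievable with vanishing error probability by separate \emph{linear} encoders, one at each source, since both target quantities ($a_i \oplus_r b_i$ componentwise, and $a_i^2\oplus_q b_i^2$ summed) are sums of a function of ${\bf A}$ alone and a function of ${\bf B}$ alone. Finally I would verify the decoding map: from the recovered data the decoder reconstructs $\langle {\bf A},{\bf B}\rangle$ exactly via the identity above, and — as in the remark following Proposition~\ref{prop:KW_sum_rate_for_inner_product} — note it cannot in general disentangle ${\bf A}$ from ${\bf B}$, preserving the security constraint. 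The main obstacle I anticipate is the careful bookkeeping of the modulus $r$: one must check simultaneously that $r$ is large enough that none of the componentwise sums $a_i+b_i$ overflow, that the division by $2$ needed in the polarization identity is well-defined and consistent after all modular reductions, and that the parity of $m$ enters precisely as stated — this is the step where an off-by-one in the modulus would invalidate the decoding, whereas the entropy/achievability part is a direct citation.
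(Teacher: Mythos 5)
Your proposal is correct and follows essentially the same route as the paper's (sketch) proof: the identity $(a_i\oplus_r b_i)^2 = a_i^2+b_i^2+2a_ib_i$ with $r$ chosen so the componentwise sums do not wrap around, subtraction of the mod-$q$ sum of squares, division by $2$ (valid since $q>2$), and invocation of the $q$-ary K\"orner--Marton linear scheme at sum rate $2H(\cdot,\cdot)$. The only slight imprecision is your justification of $r\ge 2(q-1)m$ --- per-coordinate non-overflow only requires $r>2(q-1)$, so the stated modulus is conservative --- but this matches the paper's choice and does not affect correctness.
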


\begin{proof}
The proof follows from noting that the receiver, upon receiving $\big\{ a_i\oplus_{r} b_i\big\}_{i=1}^m$ and ${\bigoplus\limits_{i=1}^m} {_{_{{\qq}}} \, a_i^2\oplus_q b_i^2}$, can reconstruct
\begin{align}
2c=qk+\Big(\sum\limits_{i=1}^m (a_i\oplus_r b_i)^2 
-\, {\bigoplus\limits_{i=1}^m} {_{_{{\qq}}} \, (a_i^2\oplus_q b_i^2)\Big)}\hspace{-0.3cm}\mod q  \ ,\nonumber
\end{align}
where there is a unique $k\in \mathbb{F}_q$ for which $c=\langle \,{\bf A} ,{\bf B}\,\rangle \in \mathbb{F}_q$.

For details, we refer the reader to Appendix~\ref{App:prop:KW_sum_rate_for_inner_product_linear_source_mappings}.  
\end{proof}

We next describe a hybrid encoding scheme.  
Note that ${\bf A}^{\intercal}{\bf B}$ can be rewritten as $g({\bf A},{\bf Y})={\bf A}^{\intercal}({\bf Y}-{\bf A}) \, \mod q$ if ${\bf Y}={\bf A}\oplus_q {\bf B}$ is known. Exploiting  \emph{K\"orner's characteristic graphs} \cite{Kor73} to enable nonlinear encoding of ${\bf A}$\footnote{For a detailed description of characteristic graphs and their entropies, we refer the reader to \cite{feizi2014network,malak2022fractional,salehi2023achievable}.}, the minimum compression rate of ${\bf A}$ for computing $g({\bf A},{\bf Y})$ given side information ${\bf Y}$ equals the \emph{conditional graph entropy} of ${\bf A}$ given ${\bf Y}$, denoted by $H_{G_{\bf A}}({\bf A}\,\vert\, {\bf Y})$, as introduced by Orlitsky and Roche \cite{OR01}. Via concatenating the structured coding scheme of K\"orner and Marton \cite{korner1979encode} to first compute ${\bf Y}$ and then the unstructured coding model of Orlitsky and Roche \cite{OR01} to next determine $g({\bf A},{\bf Y})$, it is possible to achieve a sum rate $\RKMORsum=2H({\bf Y})+H_{G_{\bf A}}({\bf A}\,\vert\, {\bf Y})$. When ${\bf Y}={\bf B}$, the required rate is $H({\bf B})+H_{G_{\bf A}}({\bf A}\,\vert\, {\bf B})$, which is smaller than $\RSWsum$ because $H_{G_{\bf A}}({\bf A}\,\vert\, {\bf B})\leq H({\bf A}\,\vert\, {\bf B})$ \cite{OR01}. 

In Figure~\ref{fig:innerproducts_general_comparison}, we contrast the sum rate performance of Propositions~\ref{prop:KW_sum_rate_for_inner_product}-\ref{prop:KW_sum_rate_for_inner_product_linear_source_mappings} for distributed computing of $\langle \,{\bf A} ,{\bf B} \,\rangle$ of vectors ${\bf A},{\bf B}\in\mathbb{F}_2^{m\times 1}$ with a small probability of error, with existing unstructured and structured coding schemes.
In Figure~\ref{fig:innerproducts_general_comparison}-(Left), we use the PMF in Corollary~\ref{cor:innerproduct_length_m_binary} where $m=2$, i.e., the pairs $(a_1,b_2)$ and $(a_2,b_1)$ represent DSBSs, each with a crossover probability $p$. 
Note that the sum rates $\RSsum$ and $\RKMORsum$ perform poorly versus $\RSVsum$, and are not indicated. The sum rate $\RKMsum$ converges to $H(\langle \,{\bf A} ,{\bf B} \,\rangle)$ at low $p$.

In Figure~\ref{fig:innerproducts_general_comparison}-(Middle), we use $m=1$, and the pair $(a,b)$ is a DSBS with a crossover probability $p$. At low $p$, $\RKMORsum$ and $\RSWsum$ converge to $H(\langle \,a ,b \,\rangle)$  whereas $\RSsum$ performs poorly. For large $p$, structured coding yields low rates ($\RSsum$ and $\RKMORsum$).

In Figure~\ref{fig:innerproducts_general_comparison}-(Right), we use $m=2$, and the pairs $(a_1,b_1)$ and $(a_2,b_2)$ are DSBSs, each with a crossover probability $p$. The performance of $\RKMsum$ is worse than $\RSWsum$ and not shown. Similarly, $\RSVsum$ is higher than $\RKMORsum$, and is not indicated. For any given $p$ value, $\RSsum$ is always smaller than $\RSWsum$, and $\RKMORsum$ approaches $H(\langle \,{\bf A} ,{\bf B} \,\rangle)$ for small and large $p$.

\subsection{Distributed Computation of Symmetric Matrices}
\label{sec:achievability_results_symmetric_matrices}
We next consider a generalization of Proposition~\ref{prop:KW_sum_rate_for_inner_product} for distributed computing of inner products to distributed computing of a square symmetric matrix ${\bf D}=(D_{ij})\in\mathbb{F}_q^{l\times l}$, given by the product ${\bf D}={\bf A}^{\intercal}{\bf B}$, where ${\bf A}, {\bf B}\in\mathbb{F}_q^{m\times l}$, for $q>2$ and $l\geq 1$, and symmetry implies that $D_{ji}=D_{ij}$ for every $i,j$.

\begin{prop}
\label{prop:KW_sum_rate_for_symmetric_matrix_product}
{\bf (Computing symmetric matrices via distributed multiplication.)} Given two sequences of random matrices ${\bf A}=\begin{bmatrix}{\bf A}_1^{\intercal}&{\bf A}_2^{\intercal}\end{bmatrix}^{\intercal}\in\mathbb{F}_q^{m\times l}$ and ${\bf B}=\begin{bmatrix}{\bf B}_1^{\intercal} & {\bf B}_2^{\intercal}\end{bmatrix}^{\intercal}\in\mathbb{F}_q^{m\times l}$ generated by two correlated memoryless $q$-ary sources, respectively, where ${\bf A}_1,{\bf A}_2,{\bf B}_1,{\bf B}_2\in \mathbb{F}_q^{m/2\times l}$ and $q>2$, the achievable sum rate by the separate encoding of the sources for the receiver to recover the symmetric matrix ${\bf D}={\bf A}^{\intercal}{\bf B}\in\mathbb{F}_q^{l\times l}$ with vanishing error is given as 
\begin{align}
\label{KW_sum_rate_for_symmetric_matrix_product}
\RKMsum = 2H({\bf U},\, {\bf V}, \, {\bf W}) \ ,
\end{align}
where ${\bf U}={\bf A}_2\oplus_q{\bf B}_1\in \mathbb{F}_q^{m/2\times l}$, ${\bf V}={\bf A}_1\oplus_q{\bf B}_2\in \mathbb{F}_q^{m/2\times l}$, and ${\bf W}={\bf A}_2^{\intercal} {\bf A}_1\oplus_q{\bf B}_1^{\intercal} {\bf B}_2\in \mathbb{F}_q^{l\times l}$ are matrix variables.
\end{prop}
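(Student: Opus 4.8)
The plan is to reduce Proposition~\ref{prop:KW_sum_rate_for_symmetric_matrix_product} to Proposition~\ref{prop:KW_sum_rate_for_inner_product} by observing that a symmetric matrix product is nothing more than a collection of inner products, and that the K\"orner--Marton linear encoding step tolerates the vectorization. Concretely, write ${\bf A} = [{\bf A}_1^\intercal \ {\bf A}_2^\intercal]^\intercal$ and ${\bf B} = [{\bf B}_1^\intercal \ {\bf B}_2^\intercal]^\intercal$ with blocks in $\mathbb{F}_q^{m/2 \times l}$, and expand
\begin{align}
\label{D_block_expansion}
{\bf D} = {\bf A}^\intercal {\bf B} = {\bf A}_1^\intercal {\bf B}_1 \oplus_q {\bf A}_2^\intercal {\bf B}_2 \ .
\end{align}
The first step is the algebraic identity underlying the scheme: as in Proposition~\ref{prop:KW_sum_rate_for_inner_product}, I would express ${\bf D}$ in terms of the modulo-$q$ sums ${\bf U} = {\bf A}_2 \oplus_q {\bf B}_1$, ${\bf V} = {\bf A}_1 \oplus_q {\bf B}_2$, and the cross term ${\bf W} = {\bf A}_2^\intercal {\bf A}_1 \oplus_q {\bf B}_1^\intercal {\bf B}_2$. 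Multiplying out ${\bf U}^\intercal {\bf V} = ({\bf A}_2 \oplus_q {\bf B}_1)^\intercal({\bf A}_1 \oplus_q {\bf B}_2)$ over $\mathbb{F}_q$ produces ${\bf A}_2^\intercal {\bf A}_1 \oplus_q {\bf A}_2^\intercal {\bf B}_2 \oplus_q {\bf B}_1^\intercal {\bf A}_1 \oplus_q {\bf B}_1^\intercal {\bf B}_2$; subtracting ${\bf W}$ isolates ${\bf A}_2^\intercal {\bf B}_2 \oplus_q {\bf B}_1^\intercal {\bf A}_1$. Here is where symmetry enters: since ${\bf D}$ is symmetric, ${\bf A}_1^\intercal {\bf B}_1 = ({\bf A}_1^\intercal {\bf B}_1)^\intercal = {\bf B}_1^\intercal {\bf A}_1$, so ${\bf B}_1^\intercal {\bf A}_1 = {\bf A}_1^\intercal {\bf B}_1$, and combining with \eqref{D_block_expansion} we recover ${\bf D} = {\bf U}^\intercal {\bf V} \ominus_q {\bf W}$ up to the symmetric/antisymmetric bookkeeping — I would verify carefully that the off-diagonal entries close correctly, using $D_{ij} = D_{ji}$, exactly as the scalar case used $\langle {\bf A},{\bf B}\rangle$ being a scalar (hence trivially ``symmetric'').

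The second step is the coding argument, which is essentially inherited. Define the $\mathbb{F}_q$-valued source ${\bf S}_1 = ({\bf A}_2, {\bf A}_1, {\bf A}_2^\intercal {\bf A}_1)$ at encoder~1 and ${\bf S}_2 = ({\bf B}_1, {\bf B}_2, {\bf B}_1^\intercal {\bf B}_2)$ at encoder~2, each a deterministic (nonlinear) mapping of the respective raw source, taking values in $\mathbb{F}_q^{m/2 \times l} \times \mathbb{F}_q^{m/2 \times l} \times \mathbb{F}_q^{l \times l}$, i.e.\ a vector over $\mathbb{F}_q$ of length $N = ml + l^2$. Flattening these to length-$N$ vectors, the receiver needs the componentwise modulo-$q$ sum ${\bf S}_1 \oplus_q {\bf S}_2 = ({\bf U}, {\bf V}, {\bf W})$. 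By the K\"orner--Marton theorem \cite{korner1979encode} applied to the i.i.d.\ sequence of pairs $({\bf S}_1, {\bf S}_2)$ over the alphabet $\mathbb{F}_q^N$ (with $q > 2$ prime as required so that $\mathbb{F}_q$ is a field and linear codes over it are available), the modulo-$q$ sum is recoverable with vanishing error from separate linear encodings at sum rate $2H({\bf S}_1 \oplus_q {\bf S}_2) = 2H({\bf U},{\bf V},{\bf W})$, which is \eqref{KW_sum_rate_for_symmetric_matrix_product}. Finally, from $({\bf U},{\bf V},{\bf W})$ the receiver forms ${\bf U}^\intercal {\bf V} \ominus_q {\bf W}$ over $\mathbb{F}_q$ and reads off ${\bf D}$ by the identity of Step~1; since all arithmetic is within $\mathbb{F}_q$ and ${\bf D} \in \mathbb{F}_q^{l\times l}$, no lifting to a larger field is needed, unlike \cite{krithivasan2011distributed,pradhan2020algebraic}.

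The main obstacle I anticipate is not the coding layer — that is a black-box application of \cite{korner1979encode} over the product alphabet — but the algebraic identity of Step~1: one must check that ${\bf U}^\intercal {\bf V} \ominus_q {\bf W}$ genuinely reproduces every entry of ${\bf D}$, and this is precisely where the symmetry hypothesis $D_{ij} = D_{ji}$ is indispensable. For a general (non-symmetric) ${\bf A}^\intercal {\bf B}$, the cross terms ${\bf A}_2^\intercal {\bf B}_2$ and ${\bf B}_1^\intercal {\bf A}_1$ do not coincide with the transpose structure one gets from ${\bf U}^\intercal {\bf V}$, so the three quantities $({\bf U},{\bf V},{\bf W})$ are insufficient; that is exactly why Proposition~\ref{prop:KW_sum_rate_for_general_matrix_product} needs extra ingredients. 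I would therefore write Step~1 out entry-by-entry, tracking how symmetry forces the antisymmetric part of ${\bf U}^\intercal{\bf V} \ominus_q {\bf W}$ to vanish and its symmetric part to equal ${\bf D}$, and only then invoke the coding theorem. A secondary point worth a remark is the security/privacy claim: as in the scalar case, from linear combinations ${\bf U},{\bf V},{\bf W}$ alone the receiver cannot in general disentangle ${\bf A}$ from ${\bf B}$, and I would note this parallels the discussion after Proposition~\ref{prop:KW_sum_rate_for_inner_product} rather than proving it here.
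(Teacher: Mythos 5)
Your coding layer is exactly the paper's: the nonlinear mappings ${\bf X}_1=({\bf A}_2,{\bf A}_1,{\bf A}_2^{\intercal}{\bf A}_1)$, ${\bf X}_2=({\bf B}_1,{\bf B}_2,{\bf B}_1^{\intercal}{\bf B}_2)$ followed by K\"orner--Marton \cite{korner1979encode} over the product alphabet, giving the sum rate $2H({\bf U},{\bf V},{\bf W})$; no issue there. The gap is in your Step~1 identity. From symmetry of ${\bf D}$ you infer ${\bf A}_1^{\intercal}{\bf B}_1=({\bf A}_1^{\intercal}{\bf B}_1)^{\intercal}={\bf B}_1^{\intercal}{\bf A}_1$, but symmetry of the \emph{sum} ${\bf D}={\bf A}_1^{\intercal}{\bf B}_1\oplus_q{\bf A}_2^{\intercal}{\bf B}_2$ does not make the individual block products symmetric. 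Consequently ${\bf U}^{\intercal}{\bf V}-{\bf W}={\bf B}_1^{\intercal}{\bf A}_1\oplus_q{\bf A}_2^{\intercal}{\bf B}_2$ is in general \emph{not} equal to ${\bf D}$, and your fallback plan --- showing that symmetry of ${\bf D}$ forces the antisymmetric part of ${\bf U}^{\intercal}{\bf V}-{\bf W}$ to vanish --- fails: using only $D_{ij}=D_{ji}$, that antisymmetric part equals ${\bf A}_2^{\intercal}{\bf B}_2-{\bf B}_2^{\intercal}{\bf A}_2$, which need not be zero. Concretely, for $l=m/2=2$ take ${\bf A}_1={\bf A}_2={\bf I}$, ${\bf B}_1=\begin{bmatrix}0&1\\0&0\end{bmatrix}$, ${\bf B}_2=\begin{bmatrix}0&0\\1&0\end{bmatrix}$: then ${\bf D}=\begin{bmatrix}0&1\\1&0\end{bmatrix}$ is symmetric, yet ${\bf U}^{\intercal}{\bf V}-{\bf W}=\begin{bmatrix}0&0\\2&0\end{bmatrix}\neq{\bf D}$ for any $q>2$. (For $l=1$ your identity is correct, because $1\times 1$ blocks are invariant under transposition --- which is exactly why the issue never surfaces in Proposition~\ref{prop:KW_sum_rate_for_inner_product}.)

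The missing idea is that the decoder must \emph{symmetrize} rather than rely on the asymmetry cancelling: the paper's decoder computes $\tfrac{1}{2}\big(({\bf U}^{\intercal}{\bf V}-{\bf W})\oplus_q({\bf U}^{\intercal}{\bf V}-{\bf W})^{\intercal}\big)$, and since $({\bf B}_1^{\intercal}{\bf A}_1\oplus_q{\bf A}_2^{\intercal}{\bf B}_2)$ plus its transpose equals ${\bf D}\oplus_q{\bf D}^{\intercal}=2{\bf D}$, this recovers ${\bf D}$ exactly (in the example above, $\tfrac{1}{2}\big(\begin{bmatrix}0&0\\2&0\end{bmatrix}\oplus_q\begin{bmatrix}0&2\\0&0\end{bmatrix}\big)={\bf D}$). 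This also pinpoints the true role of the hypothesis $q>2$, which you misattribute to the availability of linear codes --- $\mathbb{F}_2$ is also a field and \cite{korner1979encode} applies there; rather, $q>2$ is needed so that $2$ is invertible in $\mathbb{F}_q$ and the symmetrization step $\tfrac{1}{2}({\bf D}\oplus_q{\bf D}^{\intercal})={\bf D}$ is legitimate, whereas for $q=2$ the paper must impose the extra structural assumptions i)--ii) stated before Figure~\ref{fig:distributedgeneralmatrixproduct}. With the symmetrization inserted in place of your claimed identity, your argument coincides with the paper's proof.
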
 

\begin{proof}
The proof is on similar lines as Appendix~\ref{App:prop:KW_sum_rate_for_inner_product}. For details, we refer the reader to Appendix~\ref{App:prop:KW_sum_rate_for_symmetric_matrix_product}. 
\end{proof}

In the scheme of Proposition~\ref{prop:KW_sum_rate_for_symmetric_matrix_product}, similar to Proposition~\ref{prop:KW_sum_rate_for_inner_product}, the receiver may not recover ${\bf A}$ and ${\bf B}$ in their entirety, using ${\bf U}$, ${\bf V}$, and ${\bf W}$, hence rendering secure distributed compression of the sources feasible. To 
achieve information-theoretically secure distributed matrix multiplication, our future work includes exploiting the polynomial encoding scheme in \cite{chang2018capacity} that incorporates random matrices, 
and the techniques in \cite{jia2021capacity}, \cite{zhao2021expand}.

In Figure~\ref{fig:distributedgeneralmatrixproduct}-(Left), we showcase the sum rates $\RKMsum$ and $\RSWsum$ versus $p$ (in log scale) for distributed computing of symmetric matrices ${\bf D}={\bf A}^{\intercal}{\bf B}$ for $q=2$ under assumptions\footnote{For $q>2$, it is clear (from Appendix~\ref{App:prop:KW_sum_rate_for_symmetric_matrix_product} of Proposition~\ref{prop:KW_sum_rate_for_symmetric_matrix_product}) that the choices of ${\bf U}$, ${\bf V}$, and ${\bf W}$ guarantee the recovery of ${\bf A}^{\intercal}{\bf B}$ without assumptions i)-ii).}: i) ${\bf A}_1^{\intercal}{\bf B}_1={\bf B}_1^{\intercal}{\bf A}_1$, i.e., ${\bf W}={\bm 0}_{l\times l}$, and ii) ${\bf A}_2^{\intercal}{\bf A}_1={\bf B}_1^{\intercal}{\bf B}_2$. Because ${\bf D}$ is symmetric, i) and ii) lead to 
\begin{align}
{\bf U}^{\intercal}{\bf V}&=({\bf A}_2\oplus_2{\bf B}_1)^{\intercal}({\bf A}_1\oplus_2{\bf B}_2)\nonumber\\
&\overset{(a)}{=}{\bf A}_2^{\intercal}{\bf A}_1\oplus_2 {\bf A}_2^{\intercal}{\bf B}_2\oplus_2{\bf A}_1^{\intercal}{\bf B}_1\oplus_2{\bf B}_1^{\intercal}{\bf B}_2\nonumber\\
&\overset{(b)}{=}{\bf A}_1^{\intercal}{\bf B}_1\oplus_2 {\bf A}_2^{\intercal}{\bf B}_2={\bf D} \ ,\nonumber
\end{align}
where $(a)$ and $(b)$ follow from assumptions i) and ii), respectively. These assumptions ensure a rate gain of $\eta={\RSWsum}/{\RKMsum}$ that grows exponentially fast, as $p$ tends to $\{0,1\}$.

We next state a necessary condition for successful recovery of ${\bf D}={\bf A}^{\intercal}{\bf B}$, where ${\bf A}, {\bf B}\in\mathbb{F}_q^{m\times l}$, without recovering ${\bf A}$ and ${\bf B}$. This result holds true for any symmetric ${\bf D}\in\mathbb{F}_q^{l\times l}$.

\begin{prop}
\label{prop:KW_sum_rate_for_inner_product+vs_SW_sum_rate}
{\bf (A necessary condition for the nonrecovery of the sources in distributed computation of ${\bf A}^{\intercal}{\bf B}$.)} For distributed encoding of ${\bf A}, {\bf B}\in\mathbb{F}_q^{m\times l}$ to compute ${\bf D}={\bf A}^{\intercal}{\bf B}$, which is symmetric, the below condition ensures that the sum rate in (\ref{KW_sum_rate_for_inner_product}) is less than the achievable sum rate of \cite{SlepWolf1973}: 
\begin{align}
\label{lower_rate}
H({\bf A}^{\intercal}{\bf B})+H({\bf Q}\,\vert\,{\bf A}^{\intercal}{\bf B})< H({\bf A}\,\vert\,{\bf Q},{\bf A}^{\intercal}{\bf B}) \ ,
\end{align}
where ${\bf Q}=\begin{bmatrix}{\bf U} \\ {\bf V}\end{bmatrix}\in \mathbb{F}_q^{m\times 1}$, and $U$ and $V$ are defined in (\ref{UVW}).
\end{prop}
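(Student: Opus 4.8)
The plan is to expand the Slepian--Wolf sum rate $\RSWsum=H({\bf A},{\bf B})$ and the K\"orner--Marton sum rate $\RKMsum=2H({\bf U},{\bf V},W)$ as chain rules built from the two quantities ${\bf Q}=[{\bf U}^{\intercal}\ {\bf V}^{\intercal}]^{\intercal}$ and ${\bf D}={\bf A}^{\intercal}{\bf B}$, taken in the common conditioning order $({\bf D},{\bf Q},{\bf A})$, and then to subtract. The comparison isolates (\ref{lower_rate}) as the condition governing whether $\RKMsum<\RSWsum$; since the structured scheme of Proposition~\ref{prop:KW_sum_rate_for_inner_product} is the one that does \emph{not} reveal $({\bf A},{\bf B})$ in full, this is exactly the condition under which hiding the sources does not cost rate, but saves it.

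For $\RSWsum$ I would first record the deterministic relations implied by (\ref{UVW}): ${\bf Q}$ is a function of $({\bf A},{\bf B})$; the pair $({\bf A},{\bf Q})$ recovers ${\bf B}$ through ${\bf B}_1={\bf U}\ominus_q{\bf A}_2$, ${\bf B}_2={\bf V}\ominus_q{\bf A}_1$; and $({\bf B},{\bf Q})$ recovers ${\bf A}$ symmetrically. Hence $H({\bf A},{\bf B})=H({\bf A},{\bf B},{\bf Q})=H({\bf A},{\bf Q})$, and adjoining ${\bf D}$, which is a function of $({\bf A},{\bf Q})$, gives $H({\bf A},{\bf B})=H({\bf A},{\bf Q},{\bf D})$. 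The chain rule then yields $\RSWsum=H({\bf D})+H({\bf Q}\,\vert\,{\bf D})+H({\bf A}\,\vert\,{\bf Q},{\bf D})$, i.e.\ with ${\bf D}={\bf A}^{\intercal}{\bf B}$ the right-hand side of (\ref{lower_rate}) plus its left-hand side.

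For $\RKMsum$, the decoding step of Proposition~\ref{prop:KW_sum_rate_for_inner_product} shows that $({\bf U},{\bf V},W)$ determines ${\bf D}$ (for the inner product, ${\bf D}={\bf U}^{\intercal}{\bf V}\ominus_q W$ over $\mathbb{F}_q$), so $H({\bf U},{\bf V},W)\ge H({\bf U},{\bf V},{\bf D})=H({\bf Q},{\bf D})=H({\bf D})+H({\bf Q}\,\vert\,{\bf D})$ and therefore $\RKMsum\ge 2H({\bf D})+2H({\bf Q}\,\vert\,{\bf D})$. Comparing with the expansion of $\RSWsum$ above, $\RKMsum<\RSWsum$ forces $H({\bf D})+H({\bf Q}\,\vert\,{\bf D})<H({\bf A}\,\vert\,{\bf Q},{\bf D})$, which is (\ref{lower_rate}). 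In the inner-product case the inequality $H({\bf U},{\bf V},W)\ge H({\bf Q},{\bf D})$ is in fact an equality, since $W={\bf U}^{\intercal}{\bf V}\ominus_q{\bf D}$ is recoverable from $({\bf Q},{\bf D})$ unambiguously ($W\in\mathbb{F}_q$), so there (\ref{lower_rate}) is equivalent to $\RKMsum<\RSWsum$. The step I expect to require the most care is precisely this last equivalence in the general symmetric case $l>1$: one then only has ${\bf D}=\tfrac12\big({\bf U}^{\intercal}{\bf V}+{\bf V}^{\intercal}{\bf U}-{\bf W}-{\bf W}^{\intercal}\big)\bmod q$ and can recover only ${\bf W}\oplus_q{\bf W}^{\intercal}$ from $({\bf Q},{\bf D})$, so one must additionally track the skew-symmetric part of ${\bf W}$ (or appeal to $l=1$, consistent with ${\bf Q}\in\mathbb{F}_q^{m\times 1}$) to control the slack $H({\bf U},{\bf V},{\bf W})-H({\bf Q},{\bf D})$; without that, only the necessity direction asserted in the statement is obtained.
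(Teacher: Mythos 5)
Your proposal is correct and takes essentially the same route as the paper: both expand $\RSWsum=H({\bf A},{\bf B})$ by the chain rule in the order $({\bf A}^{\intercal}{\bf B},{\bf Q},{\bf A})$ using the invertibility relations from (\ref{UVW}), and both identify $\RKMsum=2H({\bf U},{\bf V},W)$ with $2H({\bf Q},{\bf A}^{\intercal}{\bf B})$ so that (\ref{lower_rate}) is precisely the comparison of the two expansions. Your added care about the step $H({\bf U},{\bf V},W)=H({\bf Q},{\bf A}^{\intercal}{\bf B})$ — an equality in the inner-product case since $W={\bf U}^{\intercal}{\bf V}\ominus_q{\bf A}^{\intercal}{\bf B}$, but only an inequality for symmetric ${\bf D}$ with $l>1$ where the skew part of ${\bf W}$ is undetermined — is a legitimate refinement, as the paper's expansion (\ref{KM_expansion}) asserts this equality without comment.
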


\begin{proof}
The inequality in (\ref{lower_rate}) can be obtained using the expansions of $\RKMsum$ and $\RSWsum$. 
For the technical steps, we refer the reader to Appendix~\ref{App:prop:KW_sum_rate_for_inner_product+vs_SW_sum_rate}.
\end{proof}

It is clear, using the expansion (\ref{KM_expansion}) of $\RKMsum$ in Appendix~\ref{App:prop:KW_sum_rate_for_inner_product+vs_SW_sum_rate},  that the sum rate $\RKMsum=2H({\bf A}^{\intercal}{\bf B})$ is achievable when $H({\bf Q} \,\vert\,{\bf A}^{\intercal}{\bf B})=0$ in (\ref{lower_rate}). In this case, (\ref{lower_rate}) implies that $H({\bf A}^{\intercal}{\bf B})< H({\bf A}\,\vert\,{\bf Q},{\bf A}^{\intercal}{\bf B})$, meaning that it is possible to recover the inner product ${\bf A}^{\intercal}{\bf B}$  while keeping ${\bf A}$ and ${\bf B}$ unknown to the receiver.

\subsection{Distributed Computation of Square Matrices}
\label{sec:achievability_results_general_matrices}

Recall that Proposition~\ref{prop:KW_sum_rate_for_symmetric_matrix_product} does not capture non-symmetric matrix products. We here consider distributed computing of a square matrix ${\bf D}=(D_{ij})\in\mathbb{F}_q^{l\times l}$, given by the product ${\bf D}={\bf A}^{\intercal}{\bf B}$, where ${\bf D}$ is not symmetric, and ${\bf A}, {\bf B}\in\mathbb{F}_q^{m\times l}$, for $l,\,m>1$, and $q>2$. The following proposition gives an achievable distributed encoding scheme of the sources ${\bf A}$ and ${\bf B}$ towards computing ${\bf D}$.

\begin{prop}
\label{prop:KW_sum_rate_for_general_matrix_product}
{\bf (Computing square matrices via distributed matrix multiplication.)} Given two sequences of random matrices ${\bf A}\in\mathbb{F}_q^{m\times l}$ and ${\bf B}\in\mathbb{F}_q^{m\times l}$ generated by two correlated memoryless $q$-ary sources, where $q>2$, the following sum rate is achievable by the separate encoding of the sources for the receiver to recover a general square matrix ${\bf D}={\bf A}^{\intercal}{\bf B}\in\mathbb{F}_q^{l\times l}$ with vanishing  error: 
\begin{align}
\label{KM_sum_rate_for_general_matrix_product}
\RKMsum = 2H(\{{\bf A}\oplus_q{\bf \tilde{B}}_j\}_{j=1}^l\, , \, \{{\bf \tilde{A}}_j^{\intercal}{\bf \tilde{A}}_j\}_{j=1}^l) \ ,
\end{align}
where we use the shorthand notation ${\bf \tilde{A}}_j^{\intercal}{\bf \tilde{A}}_j={\bf A}^{\intercal}{\bf A}\oplus_q{\bf \tilde{B}}_j^{\intercal}{\bf \tilde{B}}_j$ for $j\in \{1,2,\dots,l\}$, where ${\bf \tilde{B}}_j={\bf B}_j{\bm 1}_{1\times l}\in\mathbb{F}_q^{m\times l}$ are matrix variables,  ${\bm 1}_{1\times l}$ is a length $l$ row vector of all ones, where ${\bf B}=\begin{bmatrix}{\bf B}_1 & {\bf B}_2 & \hdots & {\bf B}_l\end{bmatrix}$ with ${\bf B}_j\in\mathbb{F}_q^{m\times 1}$ for $j\in \{1,2,\dots,l\}$.
\end{prop}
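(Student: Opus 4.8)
\textbf{Proof proposal for Proposition~\ref{prop:KW_sum_rate_for_general_matrix_product}.}

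The plan is to reduce the non-symmetric matrix product ${\bf D} = {\bf A}^{\intercal}{\bf B} \in \mathbb{F}_q^{l\times l}$ to a collection of $l$ instances of the symmetric / inner-product-type computation already handled in Propositions~\ref{prop:KW_sum_rate_for_inner_product} and~\ref{prop:KW_sum_rate_for_symmetric_matrix_product}, one for each column ${\bf B}_j$ of ${\bf B}$. First I would observe that the $j$-th column of ${\bf D}$ is ${\bf A}^{\intercal}{\bf B}_j$, so recovering ${\bf D}$ is equivalent to recovering all $l$ matrix-vector products ${\bf A}^{\intercal}{\bf B}_j$ for $j = 1,\dots,l$. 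The key trick, signalled by the shorthand ${\bf \tilde{B}}_j = {\bf B}_j {\bm 1}_{1\times l}$, is to replicate the single column ${\bf B}_j$ into an $m\times l$ matrix all of whose columns equal ${\bf B}_j$; then the ``symmetrized'' object ${\bf A}^{\intercal}{\bf A} \oplus_q {\bf \tilde{B}}_j^{\intercal}{\bf \tilde{B}}_j$ plays exactly the role that $W$ (resp.\ ${\bf W}$) played before, while ${\bf A}\oplus_q {\bf \tilde{B}}_j$ plays the role of the linear combination ${\bf U},{\bf V}$. I would then invoke the K\"orner--Marton linear coding scheme \cite{korner1979encode} to let the receiver recover, from the separately encoded sources at sum rate $2H(\cdot)$ of the listed quantities, the modulo-$q$ sums $\{{\bf A}\oplus_q {\bf \tilde{B}}_j\}_{j=1}^l$ and $\{{\bf A}^{\intercal}{\bf A}\oplus_q {\bf \tilde{B}}_j^{\intercal}{\bf \tilde{B}}_j\}_{j=1}^l$ with vanishing error, since all of these are themselves $\mathbb{F}_q$-linear (over the joint alphabet) functions of the source symbols and the scheme applies to any finite-field linear combination.

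Next I would carry out the algebraic decoding step: given ${\bf C}_j := {\bf A}\oplus_q {\bf \tilde{B}}_j$ and ${\bf S}_j := {\bf A}^{\intercal}{\bf A}\oplus_q {\bf \tilde{B}}_j^{\intercal}{\bf \tilde{B}}_j$, expand ${\bf C}_j^{\intercal}{\bf C}_j = ({\bf A}\oplus_q{\bf \tilde{B}}_j)^{\intercal}({\bf A}\oplus_q{\bf \tilde{B}}_j) = {\bf A}^{\intercal}{\bf A}\oplus_q {\bf A}^{\intercal}{\bf \tilde{B}}_j \oplus_q {\bf \tilde{B}}_j^{\intercal}{\bf A}\oplus_q {\bf \tilde{B}}_j^{\intercal}{\bf \tilde{B}}_j$, so that ${\bf C}_j^{\intercal}{\bf C}_j \ominus_q {\bf S}_j = {\bf A}^{\intercal}{\bf \tilde{B}}_j \oplus_q {\bf \tilde{B}}_j^{\intercal}{\bf A}$, all modulo $q$. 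Since every column of ${\bf \tilde{B}}_j$ equals ${\bf B}_j$, the matrix ${\bf A}^{\intercal}{\bf \tilde{B}}_j$ has every column equal to ${\bf A}^{\intercal}{\bf B}_j$, i.e.\ ${\bf A}^{\intercal}{\bf \tilde{B}}_j = ({\bf A}^{\intercal}{\bf B}_j){\bm 1}_{1\times l}$, and ${\bf \tilde{B}}_j^{\intercal}{\bf A} = {\bm 1}_{l\times 1}({\bf B}_j^{\intercal}{\bf A})$; reading off, say, the first row (or the appropriate entry) of the recovered matrix ${\bf A}^{\intercal}{\bf \tilde{B}}_j \oplus_q {\bf \tilde{B}}_j^{\intercal}{\bf A}$ isolates the desired column ${\bf A}^{\intercal}{\bf B}_j$ — here, as in Proposition~\ref{prop:KW_sum_rate_for_inner_product_linear_source_mappings}, I would lift to the integers (or take $q$ large enough that $2{\bf A}^{\intercal}{\bf B}$ does not wrap around, which is what $q>2$ together with the structure of the addition buys us) so that the diagonal-type ambiguity in dividing by $2$ is resolved, recovering ${\bf A}^{\intercal}{\bf B}_j$ exactly and hence ${\bf D}$ column by column. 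Finally I would note that the receiver learns only the combinations ${\bf A}\oplus_q{\bf \tilde{B}}_j$ and the symmetrized quadratic forms, from which ${\bf A}$ and ${\bf B}$ cannot in general be individually recovered, preserving the security feature claimed for the earlier propositions.

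The main obstacle I anticipate is verifying the \emph{correctness} of the ``lift-and-halve'' decoding step in full generality: one must check that, across the $l$ column-instances simultaneously, the modular reduction never destroys information, i.e.\ that the map from $({\bf A},{\bf B})$ to the recovered tuple is such that ${\bf A}^{\intercal}{\bf B}$ is a well-defined function of it — equivalently, that the integer lift of $\sum_i (a_{ik}\oplus_? b_{ij})^2$ minus the modulo-$q$ sum $\bigoplus_q (a_{ik}^2 \oplus_q b_{ij}^2)$ has magnitude small enough to pin down $2\sum_i a_{ik}b_{ij}$ uniquely, which forces a careful bookkeeping of the cross terms and is exactly the subtlety the radix choice $r = 2(q-1)m$ (even $m$) / $2(q-1)m+1$ (odd $m$) was introduced to handle in Proposition~\ref{prop:KW_sum_rate_for_inner_product_linear_source_mappings}; the same range argument has to be re-run here with ${\bf \tilde{B}}_j$ in place of ${\bf B}$, keeping track of the $l$-fold replication. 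The remaining steps — the reduction to columns, the expansion of ${\bf C}_j^{\intercal}{\bf C}_j$, and the appeal to \cite{korner1979encode} for the achievability of sum rate $2H(\cdot)$ on linear combinations — are routine given the earlier results, so I expect most of the real work to be in that algebraic range/injectivity check, which I would organize as a lemma mirroring the one used in Appendix~\ref{App:prop:KW_sum_rate_for_inner_product_linear_source_mappings}.
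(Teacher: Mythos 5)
Your encoding and the key algebraic identity match the paper's proof in Appendix~\ref{App:prop:KW_sum_rate_for_general_matrix_product}: you decompose ${\bf D}$ into columns ${\bf D}_j={\bf A}^{\intercal}{\bf B}_j$, replicate the column into ${\bf \tilde{B}}_j={\bf B}_j{\bm 1}_{1\times l}$, invoke K\"orner--Marton to recover $\{{\bf A}\oplus_q{\bf \tilde{B}}_j\}_{j=1}^l$ and $\{{\bf A}^{\intercal}{\bf A}\oplus_q{\bf \tilde{B}}_j^{\intercal}{\bf \tilde{B}}_j\}_{j=1}^l$ at the sum rate (\ref{KM_sum_rate_for_general_matrix_product}), and form $({\bf A}\oplus_q{\bf \tilde{B}}_j)^{\intercal}({\bf A}\oplus_q{\bf \tilde{B}}_j)-({\bf A}^{\intercal}{\bf A}\oplus_q{\bf \tilde{B}}_j^{\intercal}{\bf \tilde{B}}_j)={\bf A}^{\intercal}{\bf \tilde{B}}_j\oplus_q{\bf \tilde{B}}_j^{\intercal}{\bf A}$, exactly as the paper does. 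The gap is in how you extract ${\bf D}_j$ from this matrix. Its $(i,k)$ entry is $d_{ij}\oplus_q d_{kj}$, so ``reading off the first row'' only gives the sums $d_{1j}\oplus_q d_{kj}$ and does not isolate ${\bf D}_j$; and the mechanism you propose for resolving the factor of two --- lifting to the integers, or ``taking $q$ large enough that $2{\bf A}^{\intercal}{\bf B}$ does not wrap around'' --- is neither available (all recovered quantities exist only modulo $q$, and $q$ is fixed by the source alphabet, not a design parameter) nor needed. The correct resolution, which is what the paper uses, stays entirely inside $\mathbb{F}_q$: since $q>2$ is prime, $2$ is invertible, so the diagonal entries $2d_{ij}\bmod q$ already determine every $d_{ij}$ (equivalently, the paper solves the symmetric system of pairwise sums $d_{ij}\oplus_q d_{kj}$, which is determined over $\mathbb{F}_q$ for $q>2$, the off-diagonal equations alone sufficing once $l\geq 3$).

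Consequently, the ``main obstacle'' you anticipate --- re-running the radix-$r$ range/injectivity bookkeeping of Proposition~\ref{prop:KW_sum_rate_for_inner_product_linear_source_mappings} with ${\bf \tilde{B}}_j$ in place of ${\bf B}$ --- is a red herring for this proposition: no $\oplus_r$ addition over an enlarged radix appears anywhere in the scheme of Proposition~\ref{prop:KW_sum_rate_for_general_matrix_product}; every transmitted and recovered object is a modulo-$q$ quantity, and the only arithmetic fact the decoder needs is invertibility of $2$ in $\mathbb{F}_q$. Once you replace the lift-and-halve step by this observation, your argument coincides with the paper's. A minor phrasing point: the rate claim follows from K\"orner--Marton because each recovered quantity is a modulo-$q$ sum of a function computed locally at source $1$ (namely ${\bf A}$ and ${\bf A}^{\intercal}{\bf A}$) and a function computed locally at source $2$ (namely ${\bf \tilde{B}}_j$ and ${\bf \tilde{B}}_j^{\intercal}{\bf \tilde{B}}_j$), not because these quantities are ``$\mathbb{F}_q$-linear functions of the source symbols''; the local mappings are nonlinear, and only the combining operation is linear.
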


\begin{proof}
We refer the reader to Appendix~\ref{App:prop:KW_sum_rate_for_general_matrix_product}. 
\end{proof}

\begin{figure*}[t!]
\centering
\includegraphics[width=0.49\textwidth]{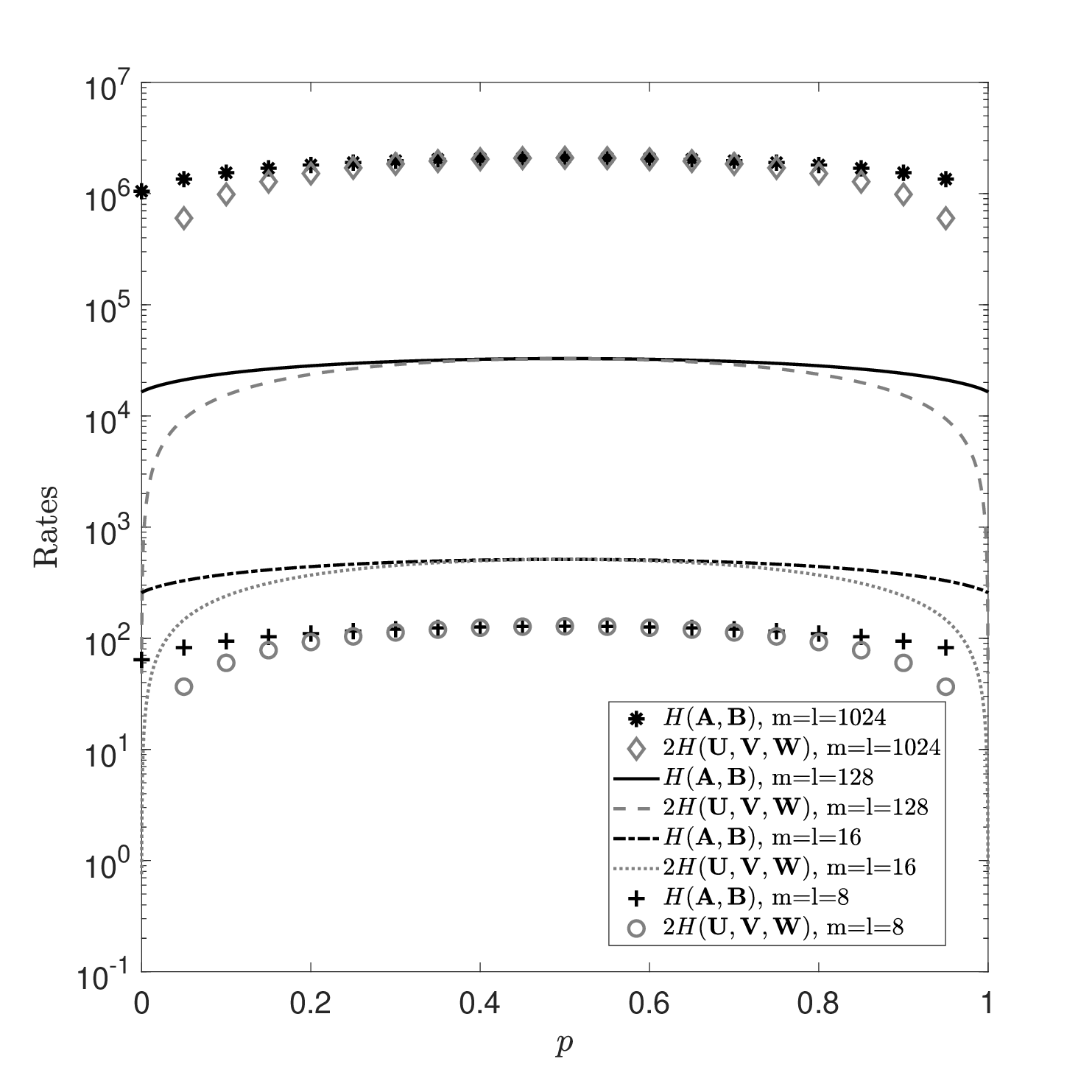}
\includegraphics[width=0.49\textwidth]{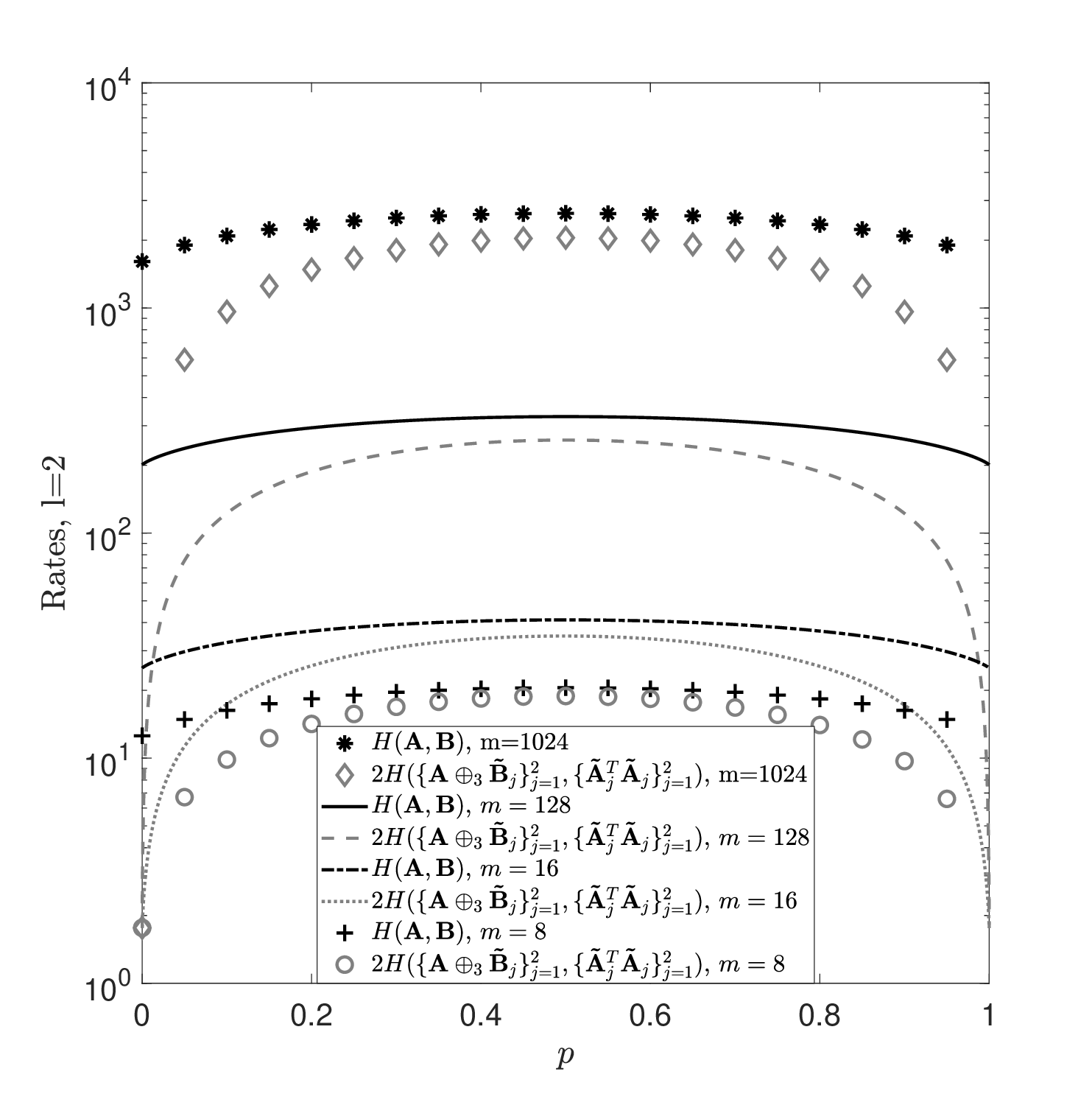}
\caption{Rate (in log scale) versus $p$ for distributed computing of (Left) symmetric matrices ${\bf A}^{\intercal}{\bf B}={\bf B}^{\intercal}{\bf A}$ via distributed multiplication of matrices ${\bf A}, {\bf B}\in \mathbb{F}_2^{m\times m}$ for different $m$, and (Right) square matrices via distributed matrix multiplication for different $m$ and $l=2$, where the joint source PMF is given in Corollary~\ref{cor:general_matrix_q3}.}
\label{fig:distributedgeneralmatrixproduct}
\end{figure*}

To demonstrate the performance of Proposition~\ref{prop:KW_sum_rate_for_general_matrix_product}, we next consider a corollary, where $l=2$, $m>1$, and with $q=3$.
\begin{cor} 
\label{cor:general_matrix_q3}
Given ${\bf A}=\begin{bmatrix} {\bf A}_1 &{\bf A}_2\end{bmatrix}\in\mathbb{F}_3^{m\times 2}$ and ${\bf B}=\begin{bmatrix} {\bf B}_1 &{\bf B}_2\end{bmatrix}\in\mathbb{F}_3^{m\times 2}$ with entries $a_{ij}\sim \big(\frac{1}{2}-\epsilon , 2\epsilon , \frac{1}{2}-\epsilon \big)$ for some $\epsilon\in\big[0,\frac{1}{2}\big]$, and $b_{i1}=b_{i2}=-a_{i2}$ that are i.i.d. across $i=1,\dots,m$ and the joint PMF of $a_{i1},b_{i1}$ satisfies
\begin{align}
P_{a_{i1},b_{i1}}=
\begin{bmatrix}
(\frac{1}{2}-\epsilon)(1-p)& (\frac{1}{2}-\epsilon)p & 0 \\
2\epsilon p & 0 & 2\epsilon(1-p)\\
0 & (\frac{1}{2}-\epsilon)(1-p)& (\frac{1}{2}-\epsilon)p
\end{bmatrix}\ .\nonumber
\end{align}
The sum rate for distributed encoding of $({\bf A},{\bf B})$ is given as
\begin{align}
\label{SW_sum_rate_for_general_matrix_product_q3}
\RSWsum=m(h(2\epsilon)+(1-2\epsilon)+h(p)) \ .
\end{align}

Exploiting Proposition~\ref{prop:KW_sum_rate_for_general_matrix_product} to compute ${\bf A}^{\intercal}{\bf B}$, we can achieve
\begin{align}
\label{KM_sum_rate_for_general_matrix_product_q3}
&\RKMsum\leq 2m h\Big(2\Big(\frac{1}{2}-\epsilon\Big)(1-p)+2\epsilon(1-p),\nonumber\\
&\hspace{2.1cm}2\Big(\frac{1}{2}-\epsilon\Big)p+2\epsilon p\Big)+2\log_2(3) \ .
\end{align}
\end{cor}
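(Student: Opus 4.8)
The plan is to establish the two displayed formulas separately, each by reducing to the per-coordinate statistics and then carrying out entropy bookkeeping. For the Slepian--Wolf rate \eqref{SW_sum_rate_for_general_matrix_product_q3}: since the rows of $({\bf A},{\bf B})$ are i.i.d.\ across $i=1,\dots,m$, we have $\RSWsum=H({\bf A},{\bf B})=m\,H(a_{i1},a_{i2},b_{i1},b_{i2})$, and the relations $b_{i1}=b_{i2}=-a_{i2}$ make the quadruple $(a_{i1},a_{i2},b_{i1},b_{i2})$ a bijective deterministic image of the pair $(a_{i1},b_{i1})$ (since $a_{i2}=-b_{i1}$ and $b_{i2}=b_{i1}$), so $\RSWsum=m\,H(a_{i1},b_{i1})$. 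I would then insert the given $3\times 3$ joint PMF into $-\sum p\log_2 p$ and group the nine summands by their common factors $\frac{1}{2}-\epsilon$, $2\epsilon$, $p$, and $1-p$: using $(\frac{1}{2}-\epsilon)+\epsilon=\frac{1}{2}$, the $\log_2 p$ and $\log_2(1-p)$ contributions combine to $h(p)$ and the remaining ones to $-2(\frac{1}{2}-\epsilon)\log_2(\frac{1}{2}-\epsilon)-2\epsilon\log_2(2\epsilon)$, which equals $h(2\epsilon)+(1-2\epsilon)$ once one writes $1-2\epsilon=2(\frac{1}{2}-\epsilon)$ and $\log_2(1-2\epsilon)=1+\log_2(\frac{1}{2}-\epsilon)$; multiplying by $m$ yields \eqref{SW_sum_rate_for_general_matrix_product_q3}. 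This step is pure algebra.

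For the bound \eqref{KM_sum_rate_for_general_matrix_product_q3} I would specialize Proposition~\ref{prop:KW_sum_rate_for_general_matrix_product} to $l=2$, $q=3$. With ${\bf B}=\begin{bmatrix}{\bf B}_1&{\bf B}_2\end{bmatrix}$ and ${\bf B}_1={\bf B}_2=-{\bf A}_2$, the matrix ${\bf \tilde{B}}_j={\bf B}_j{\bm 1}_{1\times 2}$ has both columns equal to $-{\bf A}_2$ for $j=1,2$; hence ${\bf A}\oplus_3{\bf \tilde{B}}_1={\bf A}\oplus_3{\bf \tilde{B}}_2=\begin{bmatrix}{\bf A}_1\ominus_3{\bf A}_2 & {\bm 0}_{m\times 1}\end{bmatrix}$ and ${\bf \tilde{A}}_1^{\intercal}{\bf \tilde{A}}_1={\bf \tilde{A}}_2^{\intercal}{\bf \tilde{A}}_2={\bf A}^{\intercal}{\bf A}\oplus_3({\bf A}_2^{\intercal}{\bf A}_2){\bm 1}_{2\times 2}$, so the tuple inside the entropy in \eqref{KM_sum_rate_for_general_matrix_product} collapses to two quantities and $\RKMsum=2H\big({\bf A}_1\ominus_3{\bf A}_2,\ {\bf \tilde{A}}_1^{\intercal}{\bf \tilde{A}}_1\big)$. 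Applying the chain rule $H\big({\bf A}_1\ominus_3{\bf A}_2,\ {\bf \tilde{A}}_1^{\intercal}{\bf \tilde{A}}_1\big)=H({\bf A}_1\ominus_3{\bf A}_2)+H\big({\bf \tilde{A}}_1^{\intercal}{\bf \tilde{A}}_1\,\vert\,{\bf A}_1\ominus_3{\bf A}_2\big)$, I would bound the two summands in turn. For the first, one reads off from the given PMF that every entry $a_{i1}\ominus_3 a_{i2}=a_{i1}\oplus_3 b_{i1}$ equals $0$ with probability $2(\frac{1}{2}-\epsilon)(1-p)+2\epsilon(1-p)$, equals $1$ with probability $2(\frac{1}{2}-\epsilon)p+2\epsilon p$, and is never $2$; by the i.i.d.\ structure, $H({\bf A}_1\ominus_3{\bf A}_2)=m\,h\big(2(\frac{1}{2}-\epsilon)(1-p)+2\epsilon(1-p),\ 2(\frac{1}{2}-\epsilon)p+2\epsilon p\big)$, giving, after the outer factor $2$, the first term of \eqref{KM_sum_rate_for_general_matrix_product_q3}.

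For the conditional term, writing ${\bf A}_1={\bf A}_2\oplus_3{\bf C}$ with ${\bf C}={\bf A}_1\ominus_3{\bf A}_2$ shows that, given ${\bf C}$ (hence given ${\bf C}^{\intercal}{\bf C}$), every entry of the symmetric matrix ${\bf \tilde{A}}_1^{\intercal}{\bf \tilde{A}}_1$ is a fixed affine function over $\mathbb{F}_3$ of the scalars ${\bf A}_2^{\intercal}{\bf A}_2$ and ${\bf A}_2^{\intercal}{\bf C}$, so $H\big({\bf \tilde{A}}_1^{\intercal}{\bf \tilde{A}}_1\,\vert\,{\bf C}\big)$ is at most $\log_2$ of the size of the relevant $\mathbb{F}_3$-alphabet, which after the factor $2$ contributes the $\log_2 3$ term of \eqref{KM_sum_rate_for_general_matrix_product_q3}; assembling the two bounds gives \eqref{KM_sum_rate_for_general_matrix_product_q3}. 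The main obstacle I anticipate is making that last count sharp, i.e.\ showing that once ${\bf A}_1\ominus_3{\bf A}_2$ has been reconstructed the residual $\mathbb{F}_3$-information carried by ${\bf \tilde{A}}_1^{\intercal}{\bf \tilde{A}}_1$ is precisely what the stated constant reflects, using the symmetry ${\bf A}_1^{\intercal}{\bf A}_2={\bf A}_2^{\intercal}{\bf A}_1$ and the identity ${\bf C}^{\intercal}{\bf C}={\bf A}_1^{\intercal}{\bf A}_1\oplus_3{\bf A}_1^{\intercal}{\bf A}_2\oplus_3{\bf A}_2^{\intercal}{\bf A}_2$ to avoid over-counting, and checking consistency with the exact quantities reconstructed by the scheme of Proposition~\ref{prop:KW_sum_rate_for_general_matrix_product}; once ${\bf B}_1={\bf B}_2=-{\bf A}_2$ has been used, the collapse of the $l=2$ tuple and the i.i.d.\ reduction are routine.
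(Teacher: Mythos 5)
Your treatment of \eqref{SW_sum_rate_for_general_matrix_product_q3} and of the first term in \eqref{KM_sum_rate_for_general_matrix_product_q3} is correct and is essentially the paper's own argument: $\RSWsum=m\,H(a_{i1},b_{i1})$ (the paper writes $H({\bf A},{\bf B})=H({\bf A}_1,{\bf B}_1)$), and the direct evaluation of the $3\times 3$ PMF does give $h(2\epsilon)+(1-2\epsilon)+h(p)$. Likewise, the collapse ${\bf \tilde{B}}_1={\bf \tilde{B}}_2$, the observation that ${\bf A}\oplus_3{\bf \tilde{B}}_1$ has first column ${\bf A}_1\oplus_3{\bf B}_1$ and zero second column, and the evaluation $H({\bf A}_1\oplus_3{\bf B}_1)=m\,h\big(2(\tfrac12-\epsilon)(1-p)+2\epsilon(1-p),\,2(\tfrac12-\epsilon)p+2\epsilon p\big)$ coincide with the appendix.

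The gap is in the conditional term, which is exactly where the additive constant comes from. Your substitution ${\bf A}_1={\bf A}_2\oplus_3{\bf C}$ with ${\bf C}={\bf A}_1\oplus_3{\bf B}_1$ shows that, given ${\bf C}$, the matrix ${\bf \tilde{A}}_1^{\intercal}{\bf \tilde{A}}_1$ is an invertible affine image of the pair $(s,t)=({\bf A}_2^{\intercal}{\bf A}_2,\,{\bf A}_2^{\intercal}{\bf C})\in\mathbb{F}_3^2$: its $(2,2)$, $(1,2)$ and $(1,1)$ entries are $2s$, $2s\oplus_3 t$ and $2s\oplus_3 2t\oplus_3{\bf C}^{\intercal}{\bf C}$, and $(s,t)$ is recoverable from them. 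Hence $H({\bf \tilde{A}}_1^{\intercal}{\bf \tilde{A}}_1\,\vert\,{\bf C})=H(s,t\,\vert\,{\bf C})$, and the ``size of the relevant $\mathbb{F}_3$-alphabet'' in your count is $9$, not $3$: your argument yields $H(\cdot\,\vert\,{\bf C})\leq 2\log_2(3)$ and therefore an additive constant of $4\log_2(3)$ after the outer factor of two, not the stated $2\log_2(3)$. The sharpening you defer to the end (``making that last count sharp'') is not available along this route: given ${\bf C}$, both $s$ and $t$ remain random, and for $\epsilon\in(0,\tfrac12)$ and moderate $p$ one can check that $H(s,t\,\vert\,{\bf C})\to 2\log_2(3)$ as $m$ grows, so the residual does not collapse to a single $\mathbb{F}_3$ symbol. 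For comparison, the paper's appendix parametrizes the residual differently — it first shows the off-diagonal entry $\sum_i a_{i1}a_{i2}\oplus_3 b_{i1}^2$ is computable from the side information and the diagonal terms via $2a_{ij}b_{i1}=(a_{ij}\oplus_3 b_{i1})^2-(a_{ij}^2\oplus_3 b_{i1}^2)$ — but it, too, ends with two residual $\mathbb{F}_3$-valued sums and bounds the conditional entropy by $2\log_2(3)$ (its step $(b)$); the displayed constant $2\log_2(3)$ in the corollary corresponds to that inner bound before the outer factor of two is applied. So to complete your proof you must either carry out this elimination-and-count explicitly and accept the constant it actually delivers (which is still $O(1)$ in $m$, so the exponential-gain conclusion is unaffected), or produce a genuine reduction of the residual to one $\mathbb{F}_3$ symbol, which the computation above rules out; as written, the decisive counting step is missing and the sketched route to the stated constant does not go through.
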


\begin{proof}
We refer the reader to Appendix~\ref{App:cor:general_matrix_q3}.
\end{proof}

In Figure~\ref{fig:distributedgeneralmatrixproduct}-(Right), we demonstrate the sum rate performance of Proposition~\ref{prop:KW_sum_rate_for_general_matrix_product} (in log scale) versus $p$ via contrasting the sum rates $H({\bf A},{\bf B})$ and (\ref{KM_sum_rate_for_general_matrix_product}) 
for the joint PMF model in Corollary~\ref{cor:general_matrix_q3}, where we assume that $\epsilon=0.2$. We observe that the rate gain  $\eta={\RSWsum}/{\RKMsum}$ grows exponentially fast, as $p$ tends to $\{0,1\}$.

{\bf Discussion.}
We proposed structured coding techniques for nonlinear mappings of distributed sources in a $q$-ary prime finite field to perform inner product-based matrix computation toward realizing distributed multiplication for special matrix classes, e.g., symmetric and square matrices, through imposing structural constraints on sources. 
Our future work includes the study of general matrix products, providing insights into the problems of distributed rank computation, trace computation, and low-rank matrix factorization, as well as the derivation of tighter achievability bounds for the distributed multiplication of general matrices and higher dimensional matrices or tensors.

\section*{Acknowledgment}
The author gratefully acknowledges the constructive discussions with Prof. Arun Padakandla at EURECOM.

\appendix

\subsection{Proof of Proposition \ref{prop:KW_sum_rate_for_inner_product}}
\label{App:prop:KW_sum_rate_for_inner_product}
The receiver aims to compute the inner product $\langle \,{\bf A} ,{\bf B} \,\rangle=\sum\nolimits_{i=1}^{m} a_i b_i$, with entries from a field of characteristic $q\geq 2$. Here, we focus on $q=2$, and the generalization to $q>2$ is straightforward \cite{han1987dichotomy}.

{\bf Encoding:}
Sources devise mappings $g_1: {\bf A}\to {\bf X}_1$ and $g_2: {\bf B}\to {\bf X}_2$, respectively, defined below, to determine the binary-valued column vectors
\begin{align}
\label{distributed_source_info}
{\bf X}_1&=g_1({\bf A})=\begin{bmatrix}{\bf A}_2 \\ {\bf A}_1 \\ {\bf A}_2^{\intercal} {\bf A}_1\end{bmatrix}\in \mathbb{F}_2^{(m+1) \times 1} \ ,\\
{\bf X}_2&=g_2({\bf B})=\begin{bmatrix}{\bf B}_1 \\ {\bf B}_2 \\ {\bf B}_1^{\intercal} {\bf B}_2\end{bmatrix}\in \mathbb{F}_2^{(m+1) \times 1} \ .
\end{align}
We denote by ${\bf X}_1^n, {\bf X}_2^n \in \mathbb{F}_2^{(m+1) \times n}$ the length $n$ source vectors. 
The encoders of the sources $\{{\bf X}_{1i}\}$ and $\{{\bf X}_{2i}\}$ are defined by functions $f_1: {\bf X}_1^n\to \mathcal{R}_{f_1}$ and $f_2: {\bf X}_2^n\to \mathcal{R}_{f_2}$, where $\mathcal{R}_{f_1}$ and $\mathcal{R}_{f_2}$ denote the ranges of $f_1$ and $f_2$, respectively. 
The pair of functions $(f_1,f_2)$ is called an $(n,\epsilon)$-coding scheme if there exists a function $\phi: \mathcal{R}_{f_1}\times \mathcal{R}_{f_2} \to \mathcal{Z}^n$ such that by letting 
\begin{align}
{\bf \hat{Z}}^n \triangleq \phi(f_1({\bf X}_1^n),f_2({\bf X}_2^n))\ , 
\end{align}
we have $\mathbb{P}({\bf \hat{Z}}^n \neq {\bf Z}^n)<\epsilon$. Here, ${\bf Z}$ is the modulo-two sum of ${\bf X}_1$ and ${\bf X}_2$, i.e., ${\bf Z}={\bf X}_1\oplus_2{\bf X}_2\in \mathbb{F}_2^{(m+1) \times 1}$. 

Our encoding scheme requires a well-known lemma of Elias \cite{gallager1968information}, which showed that linear codes achieve the capacity of binary symmetric channels, and its adaption to the problem of computing the modulo-two sum of DSBSs in \cite{korner1979encode}. Using a simple generalization of this result to vector variables, for fixed $\epsilon>0$ and for sufficiently large $n$, there exists a binary matrix ${\bf C}\in\mathbb{F}_2^{k\times n}$, where $f_1({\bf X}_1^n)\triangleq {\bf C}({\bf X}_1^n)={\bf C}\cdot ({\bf X}_1^n)^{\intercal}\in\mathbb{F}_2^{k\times (m+1)}$ and $f_2({\bf X}_2^n)\triangleq {\bf C}\cdot ({\bf X}_2^n)^{\intercal}\in\mathbb{F}_2^{k\times (m+1)}$ denote the modulo-two product of the matrix ${\bf C}$ with the transpose of the binary vector sequences ${\bf X}_1^n$ and ${\bf X}_2^n$, respectively, and a decoding function $\psi:\{0,1\}^{k\times (m+1)}\to \{0,1\}^{n\times (m+1)}$ that satisfy 
\begin{align}
\phi(f_1({\bf X}_1^n),f_2({\bf X}_2^n)) \triangleq \psi(f_1({\bf X}_1^n)+f_2({\bf X}_2^n))\nonumber
\end{align}
such that i) $k<n(H({\bf Z})+\epsilon)$, and ii) $\mathbb{P}(\psi({\bf C}({\bf Z}^n))\neq {\bf Z}^n)<\epsilon$. Hence, application of Elias's lemma  \cite{gallager1968information} and \cite{korner1979encode} yields that $({\bf C},{\bf C})$ is an $(n,\epsilon)$-coding scheme.

{\bf Decoding:} 
Exploiting the achievability result of K\"orner-Marton \cite{korner1979encode}, the sum rate needed for the receiver to recover the vector sequence ${\bf Z}^n={\bf X}_1^n\oplus_2{\bf X}_2^n\in \mathbb{F}_2^{(m+1) \times n}$ with a vanishing error probability can be determined as \cite{korner1979encode}:
\begin{align}
\label{KW_sum_rate_for_inner_product_achievable}
\RKMsum = 2H({\bf U},\, {\bf V}, \, W) \ .
\end{align}
Using $\psi$ given in ii), the receiver can recover ${\bf Z}^n$. However, lossless decoding of ${\bf X}_1^n$ and ${\bf X}_2^n$ is not guaranteed.

We next show that the sum rate in (\ref{KW_sum_rate_for_inner_product_achievable}) is sufficient to recover $\langle \,{\bf A} ,{\bf B} \,\rangle$.  
Using ${\bf \hat{Z}}^n$, the receiver computes
\begin{align}
\label{sufficiency_inner_product}
{\bf U}^{\intercal} {\bf V}-W&=({\bf A}_2^{\intercal}\oplus_q{\bf B}_1^{\intercal})({\bf A}_1\oplus_q{\bf B}_2)\nonumber\\
&-({\bf A}_2^{\intercal} {\bf A}_1\oplus_q{\bf B}_1^{\intercal} {\bf B}_2)&\nonumber\\
&={\bf B}_1^{\intercal}{\bf A}_1\oplus_q{\bf A}_2^{\intercal}{\bf B}_2\nonumber\\
&\overset{(a)}{=}{\bf A}_1^{\intercal}{\bf B}_1\oplus_q{\bf A}_2^{\intercal}{\bf B}_2=\langle \,{\bf A} ,{\bf B} \,\rangle & \  ,
\end{align}
where $(a)$ follows from ${\bf B}_1^{\intercal}{\bf A}_1={\bf A}_1^{\intercal}{\bf B}_1\in\mathbb{F}_q$.

Hence, (\ref{KW_sum_rate_for_inner_product_achievable}) with (\ref{sufficiency_inner_product}) gives the achievability result we seek.

\subsection{Proof of Corollary~\ref{cor:innerproduct_length_m_binary}}
\label{App:cor:innerproduct_length_m_binary}
Employing the definitions of ${\bf X}_1$ and ${\bf X}_2$ in (\ref{distributed_source_info}), (\ref{KW_sum_rate_for_inner_product}) and \cite{korner1979encode}, we can determine the sum rate needed for the receiver to recover $({\bf U},\, {\bf V}, \, W)$ in an asymptotic manner:
\begin{align}
\label{KW_sum_rate_inner_product}
\RKMsum &= 2H({\bf U},\, {\bf V}, \, W) \nonumber\\
&=2H({\bf U})+2H({\bf V})+2H({\bf A}_2^{\intercal} {\bf A}_1\oplus_2{\bf B}_1^{\intercal} {\bf B}_2 \, \vert \, {\bf U}, {\bf V})\nonumber\\
&=2\frac{m}{2}h(p)+2\frac{m}{2}h(p)+2H({\bf A}_2^{\intercal} {\bf A}_1\oplus_2{\bf B}_1^{\intercal} {\bf B}_2 \, \vert \, {\bf U}, {\bf V})\nonumber\\
&\overset{(a)}{=}2mh(p)+2H({\bf U}^{\intercal}{\bf A}_1\oplus_2 {\bf A}_2^{\intercal}{\bf V}\, \vert \, {\bf U}, {\bf V})\nonumber\\
&\overset{(b)}{=}2mh(p)+2H({\bf Q}^{\intercal}{\bf A} \, \vert  \, {\bf U}, {\bf V}) \nonumber\\
&\overset{(c)}{=}2mh(p)+2H\Big(\sum\limits_{i=1}^{m} a_i z_i \,\vert\, {\bf Q}\Big)\nonumber\\
&\overset{(d)}{=}2mh(p)+2\sum\limits_{j=1}^m \binom{m}{j} p^j(1-p)^{m-j} H\Big(\sum\limits_{i=1}^j a_i\Big)\nonumber\\
&\overset{(e)}{=}2mh(p)+2(1-(1-p)^m) \ ,
\end{align}
where $(a)$ follows from employing the relations ${\bf U}={\bf A}_2\oplus_2{\bf B}_1=\begin{bmatrix}u_1 & u_2 & \hdots & u_{m/2}\end{bmatrix}^{\intercal}$ and ${\bf V}={\bf A}_1\oplus_2{\bf B}_2=\begin{bmatrix}v_1 & v_2 & \hdots & v_{m/2}\end{bmatrix}^{\intercal}$, and simplification using conditioning, $(b)$ follows from employing ${\bf Q}=\begin{bmatrix}{\bf U} \\ {\bf V}\end{bmatrix}\in\mathbb{F}_2^{m\times 1}$ and ${\bf A}_2^{\intercal}{\bf V}={\bf V}^{\intercal}{\bf A}_2$, 
$(c)$ from the definition of ${\bf Q}$, and $(d)$ from $H({\bf Q}^{\intercal}{\bf A} \,\vert\, {\bf Q})\leq H({\bf Q}^{\intercal}{\bf A} \,\vert\, {\bf Q}^{\intercal}{\bm 1}_m)$, where ${\bf Q}^{\intercal}{\bm 1}_m=\sum\limits_{i=1}^m y_i \sim B(m,p)$, such that $H({\bf Q}^{\intercal}{\bf A} \,\vert\, {\bf Q}^{\intercal}{\bm 1}_m=j)=H\Big(\sum\limits_{i=1}^j a_i\Big)$, $j\geq 1$ exploiting that $a_i\overset{i.i.d.}{\sim} {\rm Bern}(1/2)$, and $H({\bf Q}^{\intercal}{\bf A} \,\vert\, {\bf Q}^{\intercal}{\bm 1}_m=0)=0$. Finally, incorporating $H\Big(\sum\limits_{i=1}^j a_i\Big)=1$, $j\geq 1$ we obtain $(e)$.

The encoding rate for asymptotic lossless compression of ${\bf A}$ and ${\bf B}$ is given by the Slepian-Wolf theorem \cite{SlepWolf1973}:
\begin{align}
\label{SW_sum_rate_inner_product}
\RSWsum&=H({\bf A},{\bf B})\nonumber\\
&\overset{(a)}{=}2H({\bf A}_1,{\bf B}_2)\nonumber\\
&=2\frac{m}{2}(1+h(p))\nonumber\\
&=m(1+h(p)) \ ,
\end{align}
where $(a)$ follows from using ${\bf A}=\begin{bmatrix}{\bf A}_1\\{\bf A}_2\end{bmatrix}$ and ${\bf B}=\begin{bmatrix}{\bf B}_1\\{\bf B}_2\end{bmatrix}$, with ${\bf A}_1 \independent {\bf A}_2$,  ${\bf B}_1 \independent {\bf B}_2$, and ${\bf A}$ and ${\bf B}$ having i.i.d. entries.

From (\ref{KW_sum_rate_inner_product}) and (\ref{SW_sum_rate_inner_product}), $\eta={\RSWsum}/{\RKMsum}$ is given by (\ref{gain_DSBS_channel_KM}).

\subsection{Proof of Proposition~\ref{prop:KW_sum_rate_for_inner_product_linear_source_mappings}}
\label{App:prop:KW_sum_rate_for_inner_product_linear_source_mappings}
Here, we provide a sketch of the proof. For binary sources, i.e., $q=2$, it is easy to verify that
\begin{align}
\langle \,{\bf A} ,{\bf B} \,\rangle=\Big\lfloor \frac{1}{2}\Big(\sum\limits_{i=1}^m a_i\oplus_{r} b_i-\sum\limits_{i=1}^m (a_i\oplus_2 b_i) \Big)\Big\rfloor \mod 2 \ ,\nonumber
\end{align}
where $r=2m$ for $m$ even, and $r=2m+1$ for $m$ odd, respectively. Hence, the following sum rate is achievable:
\begin{align}
\label{linear_source_mapping_linear_encoding_binary}
\RSsum=2H\Big(\sum\limits_{i=1}^m a_i\oplus_{r} b_i,\, \{a_i\oplus_2 b_i\}_{i=1}^m \Big) \ .
\end{align}
When the data is generated by two correlated memoryless $q$-ary sources for $q\geq 2$, it is possible to achieve a sum rate
\begin{align}
\RSsum=2H\Big(\Big\{ a_i\oplus_{r} b_i\Big\}_{i=1}^m,\, {\bigoplus_{i=1}^m} {_{_{{\qq}}} \, a_i^2\oplus_q b_i^2} \Big) \ ,\nonumber
\end{align}
where $r=2(q-1)m$ and $r=2(q-1)m+1$ for even and odd $m$, respectively. Upon receiving $\big\{ a_i\oplus_{r} b_i\big\}_{i=1}^m$ and ${\bigoplus\limits_{i=1}^m} {_{_{{\qq}}} \, a_i^2\oplus_q b_i^2}$, the receiver can reconstruct
\begin{align}
2c=qk+\Big(\sum\limits_{i=1}^m (a_i\oplus_r b_i)^2 
-\, {\bigoplus\limits_{i=1}^m} {_{_{{\qq}}} \, (a_i^2\oplus_q b_i^2)\Big)}\hspace{-0.3cm}\mod q  \ ,\nonumber
\end{align}
where there is a unique $k\in \mathbb{F}_q$ for which $c=\langle \,{\bf A} ,{\bf B}\,\rangle \in \mathbb{F}_q$.

\subsection{Proof of Proposition~\ref{prop:KW_sum_rate_for_symmetric_matrix_product}}
\label{App:prop:KW_sum_rate_for_symmetric_matrix_product}

Given two sequences of random matrices ${\bf A},\, {\bf B}\in\mathbb{F}_q^{m\times l}$, the receiver aims to compute ${\bf A}^{\intercal}{\bf B}={\bf B}^{\intercal}{\bf A}\in \mathbb{F}_q^{l\times l}$.

{\bf Encoding:}
Each source uses mappings $g_1: {\bf A}\to {\bf X}_1$ and $g_2: {\bf B}\to {\bf X}_2$, to determine the respective matrices:
\begin{align}
\label{distributed_source_info_matrices}
{\bf X}_1&=g_1({\bf A})=\begin{bmatrix}{\bf A}_2 \\ {\bf A}_1 \\ {\bf A}_2^{\intercal} {\bf A}_1\end{bmatrix}\in \mathbb{F}_q^{(m+l) \times l} \ ,\\
{\bf X}_2&=g_2({\bf B})=\begin{bmatrix}{\bf B}_1 \\ {\bf B}_2 \\ {\bf B}_1^{\intercal} {\bf B}_2\end{bmatrix}\in \mathbb{F}_q^{(m+l) \times l} \ .
\end{align} 

Following the steps of Appendix~\ref{App:prop:KW_sum_rate_for_inner_product}, there exists an $(n,\epsilon)$-coding scheme $({\bf C},{\bf C})$ for a matrix ${\bf C}\in\mathbb{F}_q^{k\times n}$, where $k=(m+l)\times l$, for decoding ${\bf Z}={\bf X}_1\oplus_q{\bf X}_2\in \mathbb{F}_q^{(m+l) \times l}$ with a small probability of error.

{\bf Decoding:} 
Exploiting the achievability result of K\"orner-Marton \cite{korner1979encode}, the sum rate needed for the receiver to recover the matrix sequence ${\bf Z}^n={\bf X}_1^n\oplus_q{\bf X}_2^n$ with a vanishing error probability can be determined as \cite{korner1979encode}:
\begin{align}
\label{KW_sum_rate_for_symmetric_matrix_product_proof}
\RKMsum = 2H({\bf U},\, {\bf V}, \, {\bf W}) \ .
\end{align}

For $q>2$, given a symmetric matrix ${\bf D}={\bf A}^{\intercal}{\bf B}\in\mathbb{F}_q^{l\times l}$, the following relation holds:
\begin{align}
\label{symmetric_matrix}
{\bf D}=\frac{1}{2}({\bf D}\oplus_q{\bf D}^{\intercal}) = {\bf A}^{\intercal}{\bf B}={\bf B}^{\intercal}{\bf A} \ .
\end{align} 

Using ${\bf \hat{Z}}^n$, the receiver computes
\begin{align}
\label{sufficiency_symmetric_matrix_product}
&\frac{1}{2}(({\bf U}^{\intercal} {\bf V}-{\bf W})\oplus_q({\bf U}^{\intercal} {\bf V}-{\bf W})^{\intercal})\nonumber\\
&=\frac{1}{2}(({\bf A}_2^{\intercal}\oplus_q{\bf B}_1^{\intercal})({\bf A}_1\oplus_q{\bf B}_2)-({\bf A}_2^{\intercal} {\bf A}_1\oplus_q{\bf B}_1^{\intercal} {\bf B}_2)\nonumber\\
&\hspace{0.45cm}\oplus_q({\bf A}_1^{\intercal}\oplus_q{\bf B}_2^{\intercal})({\bf A}_2\oplus_q{\bf B}_1)-({\bf A}_1^{\intercal} {\bf A}_2\oplus_q{\bf B}_2^{\intercal} {\bf B}_1))\nonumber\\
&=\frac{1}{2}(({\bf B}_1^{\intercal}{\bf A}_1\oplus_q{\bf A}_2^{\intercal}{\bf B}_2)\oplus_q({\bf A}_1^{\intercal}{\bf B}_1\oplus_q {\bf B}_2^{\intercal}{\bf A}_2))\nonumber\\
&=\frac{1}{2}(({\bf A}_1^{\intercal}{\bf B}_1\oplus_q{\bf A}_2^{\intercal}{\bf B}_2)\oplus_q({\bf A}_1^{\intercal}{\bf B}_1\oplus_q{\bf A}_2^{\intercal}{\bf B}_2)^{\intercal})\nonumber\\
&=\frac{1}{2}({\bf A}^{\intercal}{\bf B}\oplus_q ({\bf A}^{\intercal}{\bf B})^{\intercal})={\bf A}^{\intercal}{\bf B}\ ,
\end{align}
where the last equality follows from that ${\bf D}={\bf A}^{\intercal}{\bf B}$ is a symmetric matrix and it satisfies (\ref{symmetric_matrix}).

Combining (\ref{KW_sum_rate_for_symmetric_matrix_product_proof}) with (\ref{sufficiency_symmetric_matrix_product}) gives the achievability result.

\subsection{Proof of Proposition~\ref{prop:KW_sum_rate_for_inner_product+vs_SW_sum_rate}}
\label{App:prop:KW_sum_rate_for_inner_product+vs_SW_sum_rate}

We first show that the encoding scheme of Proposition \ref{prop:KW_sum_rate_for_inner_product} does not allow the recovery of ${\bf A},{\bf B}\in\mathbb{F}_q^{m\times 1}$ by the receiver, i.e., $H({\bf A},{\bf B}\,\vert\,{\bf A}^{\intercal}{\bf B},{\bf Q})>0$, for $m>1$. 

The receiver can recover ${\bf Q}=\begin{bmatrix}{\bf U} \\ {\bf V}\end{bmatrix}\in \mathbb{F}_q^{m\times 1}$ and $W\in\mathbb{F}_q$ with a small probability of error. The extra rate needed from the encoders for the receiver to determine ${\bf A},\,{\bf B}\in \mathbb{F}_q^{m\times 1}$ is 
\begin{align}
H({\bf A},{\bf B}\,\vert\,{\bf A}^{\intercal}{\bf B},{\bf Q})&\overset{(a)}{=}H({\bf A},{\bf B},{\bf A}^{\intercal}{\bf B},{\bf Q})-H({\bf A}^{\intercal}{\bf B},{\bf Q})\nonumber\\
&\overset{(b)}{=}H({\bf A},{\bf B},{\bf Q})-H({\bf A}^{\intercal}{\bf B},{\bf Q})\nonumber\\
&\overset{(c)}{=}H({\bf A},{\bf Q})-H({\bf A}^{\intercal}{\bf Q},{\bf Q})\nonumber\\
&\overset{(d)}{=}H({\bf A},{\bf A}^{\intercal}{\bf Q},{\bf Q})-H({\bf A}^{\intercal}{\bf Q},{\bf Q})\nonumber\\
&\overset{(e)}{=}H({\bf A}\,\vert\,{\bf A}^{\intercal}{\bf Q},{\bf Q})\overset{(f)}{\geq} 0 \ ,\nonumber
\end{align}
where $(a)$ follows from using the definition of conditional entropy and that the receiver can compute ${\bf A}^{\intercal}{\bf B}={\bf U}^{\intercal} {\bf V}-W$ from ${\bf Q}$ and $W$, $(b)$ from $H({\bf A}^{\intercal}{\bf B}\,\vert\,{\bf Q},{\bf A},{\bf B})=0$, $(c)$ from $H({\bf B}\,\vert\,{\bf A},{\bf Q})=0$ and ${\bf A}^{\intercal}{\bf B}={\bf A}^{\intercal}{\bf Q}$ given ${\bf Q}$, $(d)$ from $H({\bf A}^{\intercal}{\bf Q}\,\vert\,{\bf A},{\bf Q})=0$, $(e)$ from employing the definition of conditional entropy, and $(f)$ holds with equality if the function $g({\bf A},{\bf Q})={\bf A}^{\intercal}{\bf Q}$ is partially invertible, i.e., $H({\bf A}\,\vert\,g({\bf A},{\bf Q}), {\bf Q} ) = 0$, e.g., when $g$ is the arithmetic sum or the modulo sum of the two vectors. Hence, the inequality in $(f)$ is strict for inferring ${\bf A},\,{\bf B}\in \mathbb{F}_q^{m\times 1}$ from ${\bf Q}$ and $W$.

We next prove the main result of the proposition. 
Given matrix variables ${\bf A}, {\bf B}\in\mathbb{F}_q^{m\times l}$ such that ${\bf D}={\bf A}^{\intercal}{\bf B}$ is symmetric, letting ${\bf Q}=\begin{bmatrix}{\bf U} \\ {\bf V}\end{bmatrix}$, we first expand $\RKMsum$ as 
\begin{align}
\label{KM_expansion}
\RKMsum&=2H({\bf U},\, {\bf V}, \, W)\nonumber\\
&=2H({\bf U},\, {\bf V}, \, {\bf A}^{\intercal}{\bf B})\nonumber\\
&=2H({\bf A}^{\intercal}{\bf B})+2H({\bf Q}\,\vert\,{\bf A}^{\intercal}{\bf B})\ .
\end{align}

We next expand $\RSWsum$ as 
\begin{align}
\label{SW_expansion}
&\RSWsum=H({\bf A},{\bf B})=H({\bf A},{\bf B},{\bf U},\, {\bf V}, \, W,{\bf A}^{\intercal}{\bf B})\nonumber\\
&=H({\bf U},\, {\bf V}, \, {\bf A}^{\intercal}{\bf B})+H({\bf A}_1,{\bf B}_1,{\bf A}_2,{\bf B}_2\,\vert\,{\bf A}_2\oplus_q{\bf B}_1,\, \nonumber\\
&\hspace{3.8cm}{\bf A}_1\oplus_q{\bf B}_2,{\bf A}_1^{\intercal}{\bf B}_1\oplus_q{\bf A}_2^{\intercal}{\bf B}_2)\nonumber\\
&=H({\bf U},\, {\bf V}, \, {\bf A}^{\intercal}{\bf B})+H({\bf A}_1,{\bf A}_2\,\vert\,{\bf U},\, {\bf V},{\bf A}_1^{\intercal}({\bf U}\oplus_q{\bf A}_2)\nonumber\\
&\hspace{5cm}\oplus_q{\bf A}_2^{\intercal}({\bf V}\oplus_q{\bf A}_1))\nonumber\\
&=H({\bf Q}, \, {\bf A}^{\intercal}{\bf B})+H({\bf A}\,\vert\,{\bf Q},{\bf A}^{\intercal}{\bf Q}\oplus_q{\bf A}_1^{\intercal}{\bf A}_2\oplus_q{\bf A}_2^{\intercal}{\bf A}_1)\nonumber\\
&=H({\bf A}^{\intercal}{\bf B})+H({\bf Q} \,\vert\, {\bf A}^{\intercal}{\bf B})\nonumber\\
&+H({\bf A}\,\vert\,{\bf Q},{\bf A}^{\intercal}{\bf Q}\oplus_q{\bf A}_1^{\intercal}{\bf A}_2\oplus_q{\bf A}_2^{\intercal}{\bf A}_1)\ ,
\end{align}
where it is easy to observe that ${\bf A}^{\intercal}{\bf B}={\bf A}_1^{\intercal}({\bf U}+{\bf A}_2)+{\bf A}_2^{\intercal}({\bf V}+{\bf A}_1)={\bf A}^{\intercal}{\bf Q}\oplus_q{\bf A}_1^{\intercal}{\bf A}_2\oplus_q{\bf A}_2^{\intercal}{\bf A}_1$.

Similarly, via exploiting $\tilde{\bf Q}=\begin{bmatrix}{\bf V} \\ {\bf U}\end{bmatrix}$, we can show that
\begin{align}
\label{SW_expansion2}
\RSWsum&
=H({\bf A}^{\intercal}{\bf B})+H({\bf Q}\,\vert\, {\bf A}^{\intercal}{\bf B})\nonumber\\
&+H({\bf B}\,\vert\,\tilde{\bf Q},\tilde{\bf Q}^{\intercal}{\bf B}\oplus_q{\bf B}_1^{\intercal}{\bf B}_2+{\bf B}_2^{\intercal}{\bf B}_1)\ ,
\end{align} 
where ${\bf A}^{\intercal}{\bf B}
=\tilde{\bf Q}^{\intercal}{\bf B}\oplus_q{\bf B}_1^{\intercal}{\bf B}_2\oplus_q{\bf B}_2^{\intercal}{\bf B}_1$.

From (\ref{SW_expansion}) and (\ref{SW_expansion2}), we note that $H({\bf A}\,\vert\,{\bf A}^{\intercal}{\bf Q})=H({\bf B}\,\vert\,\tilde{\bf Q}^{\intercal}{\bf B})$. 
Contrasting (\ref{KM_expansion}) with (\ref{SW_expansion}), the condition
\begin{align}
H({\bf A}^{\intercal}{\bf B})+H({\bf Q}\,\vert\,{\bf A}^{\intercal}{\bf B})&< H({\bf A}\,\vert\,{\bf Q},{\bf A}^{\intercal}{\bf B})\nonumber\\
&=H({\bf B}\,\vert\,\tilde{\bf Q},{\bf A}^{\intercal}{\bf B})\nonumber
\end{align}
ensures that $\RKMsum<\RSWsum$. 

When $l=1$, we have ${\bf A}_1^{\intercal}{\bf A}_2={\bf A}_2^{\intercal}{\bf A}_1$ and ${\bf B}_1^{\intercal}{\bf B}_2={\bf B}_2^{\intercal}{\bf B}_1$, hence, the above condition is equivalent to
\begin{align}
H({\bf A}^{\intercal}{\bf B})+H({\bf Q}\,\vert\,{\bf A}^{\intercal}{\bf B})&<H({\bf A}\,\vert\,{\bf Q},{\bf A}^{\intercal}{\bf Q})\nonumber\\
&=H({\bf B}\,\vert\,\tilde{\bf Q},\tilde{\bf Q}^{\intercal}{\bf B})\ .\nonumber
\end{align}

\subsection{Proof of Proposition~\ref{prop:KW_sum_rate_for_general_matrix_product}}
\label{App:prop:KW_sum_rate_for_general_matrix_product}

Note that ${\bf D}_j={\bf A}^{\intercal}{\bf B}_j$ for $j\in\{1,\dots, l\}$, where ${\bf D}_j=\begin{bmatrix} d_{1j} & d_{2j} & \hdots & d_{lj}
\end{bmatrix}^{\intercal}\in\mathbb{F}_q^{l\times 1}$. 
Following the steps in Appendix~\ref{App:prop:KW_sum_rate_for_inner_product}, the receiver can recover $\{{\bf A}\oplus_q{\bf \tilde{B}}_j\}_{j=1}^l\, , \, \{{\bf A}^{\intercal}{\bf A}\oplus_q{\bf \tilde{B}}_j^{\intercal}{\bf \tilde{B}}_j\}_{j=1}^l$, and then compute the following $l\times l$ matrix:
\begin{align}
&({\bf A}\oplus_q{\bf \tilde{B}}_j)^{\intercal} ({\bf A}\oplus_q{\bf \tilde{B}}_j)-({\bf A}^{\intercal}{\bf A}\oplus_q{\bf \tilde{B}}_j^{\intercal}{\bf \tilde{B}}_j)\nonumber\\
&={\bf A}^{\intercal}{\bf \tilde{B}}_j\oplus_q {\bf \tilde{B}}_j^{\intercal}{\bf A}\nonumber\\
&=\begin{bmatrix}
d_{1j}\oplus_q d_{1j} & d_{1j}\oplus_q d_{2j} &  \hdots & d_{1j}\oplus_q d_{lj}\\
d_{2j}\oplus_q d_{1j} & d_{2j}\oplus_q d_{2j} &  \hdots & d_{2j}\oplus_q d_{lj}\\
 \vdots & \vdots & \ddots & \vdots \\
d_{lj}\oplus_q d_{1j} & d_{lj}\oplus_q d_{2j} &  \hdots & d_{lj}\oplus_q d_{lj}
\end{bmatrix}\ ,\nonumber
\end{align}
which is a symmetric matrix with $l$ unknowns and $\frac{l(l-1)}{2}\geq l$ linearly independent equations for $l\geq 2$ and $q> 2$. Hence, ${\bf D}_j$, for each $j=1,\dots,l$, as well as ${\bf D}$ can be recovered.

\subsection{Proof of Corollary~\ref{cor:general_matrix_q3}}
\label{App:cor:general_matrix_q3}

The sum rate for distributed encoding of $({\bf A},{\bf B})$ is given as
\begin{align}
\RSWsum=H({\bf A},{\bf B})&=H({\bf A}_1,{\bf B}_1,{\bf A}_2,{\bf B}_2)\nonumber\\
&=H({\bf A}_1,{\bf B}_1)\nonumber\\
&=m(h(2\epsilon)+(1-2\epsilon)+h(p)) \ .\nonumber
\end{align}

Exploiting Proposition~\ref{prop:KW_sum_rate_for_general_matrix_product} to compute ${\bf A}^{\intercal}{\bf B}$, we can achieve
\begin{align}
\RKMsum&=2H({\bf A}\oplus_3{\bf \tilde{B}}_1\, , \, {\bf A}^{\intercal}{\bf A}\oplus_3{\bf \tilde{B}}_1^{\intercal}{\bf \tilde{B}}_1)\nonumber\\ 
&=2m h\Big(2\Big(\frac{1}{2}-\epsilon\Big)(1-p)+2\epsilon(1-p),\nonumber\\
&\hspace{3cm}2\Big(\frac{1}{2}-\epsilon\Big)p+2\epsilon p\Big) \nonumber\\
&+2h({\bf A}^{\intercal}{\bf A}\oplus_3{\bf \tilde{B}}_1^{\intercal}{\bf \tilde{B}}_1\,\vert\,{\bf A}\oplus_3{\bf \tilde{B}}_1)\nonumber\\
&\leq 2m h\Big(2\Big(\frac{1}{2}-\epsilon\Big)(1-p)+2\epsilon(1-p),\nonumber\\
&\hspace{1.4cm} 2\Big(\frac{1}{2}-\epsilon\Big)p+2\epsilon p\Big)+2\log_2(3)\ ,\nonumber
\end{align}
where the last step follows from using that
\begin{align}
{\bf A}^{\intercal}{\bf A}\oplus_3{\bf \tilde{B}}_1^{\intercal}{\bf \tilde{B}}_1=
\begin{bmatrix}
{\bf A}_1^{\intercal}{\bf A}_1\oplus_3{\bf B}_1^{\intercal}{\bf B}_1 & {\bf A}_1^{\intercal}{\bf A}_2\oplus_3{\bf B}_1^{\intercal}{\bf B}_1 \\
{\bf A}_2^{\intercal}{\bf A}_1\oplus_3{\bf B}_1^{\intercal}{\bf B}_1 & {\bf A}_2^{\intercal}{\bf A}_1\oplus_3{\bf B}_1^{\intercal}{\bf B}_1
\end{bmatrix} \ ,\nonumber
\end{align}
and evaluating the conditional entropy as
\begin{align}
h({\bf A}^{\intercal}&{\bf A}\oplus_3{\bf \tilde{B}}_1^{\intercal}{\bf \tilde{B}}_1 \,\vert\,{\bf A}\oplus_3{\bf \tilde{B}}_1)\nonumber\\
&=h({\bf A}_1^{\intercal}{\bf A}_1\oplus_3{\bf B}_1^{\intercal}{\bf B}_1,{\bf A}_1^{\intercal}{\bf A}_2\oplus_3{\bf B}_1^{\intercal}{\bf B}_1,\nonumber\\
&\hspace{3.95cm}{\bf A}_2^{\intercal}{\bf A}_1\oplus_3{\bf B}_1^{\intercal}{\bf B}_1\,\vert\,{\bf A}\oplus_3{\bf \tilde{B}}_1)\nonumber\\
&=h(\big\{\sum\limits_{i=1}^m a_{ij}^2\oplus_3 b_{i1}^2\big\}_{j=1}^2,\sum\limits_{i=1}^m a_{i1}a_{i2}\oplus_3 b_{i1}^2\,\vert\,{\bf A}\oplus_3{\bf \tilde{B}}_1)\nonumber\\
&\overset{(a)}{=}h(\sum\limits_{i=1}^m a_{i1}^2\oplus_3 b_{i1}^2,\sum\limits_{i=1}^m a_{i2}^2\oplus_3 b_{i1}^2\,\vert\,{\bf A}\oplus_3{\bf \tilde{B}}_1)\nonumber\\
&\overset{(b)}{\leq} 2\log_2(3) \ ,\nonumber
\end{align}
where $(a)$ follows from that $a_{i1}a_{i2}\oplus_3 b_{i1}^2$ can be recovered given the side information ${\bf A}\oplus_3{\bf \tilde{B}}_1=\big\{a_{ij}\oplus_3 b_{i1}\big\}_{i,j}$ for $i=1,2,\dots,m$ and $j=1,2$, and given $\big\{a_{ij}^2\oplus_3 b_{i1}^2\big\}_j$, for $j=1,2$. More specifically, the receiver can recover
\begin{align}
\label{prod_var}
2a_{ij}b_{i1}=(a_{ij}\oplus_3 b_{i1})^2-(a_{ij}^2\oplus_3 b_{i1}^2)\ , \quad j=1,2 \ . 
\end{align}
Hence, using the side information ${\bf A}\oplus_3{\bf \tilde{B}}_1$ and (\ref{prod_var}), the receiver can recover $a_{i1}a_{i2}\oplus_3 b_{i1}^2$ as follows:
\begin{align}
a_{i1}a_{i2}\oplus_3 b_{i1}^2&=(a_{i1}\oplus_3 b_{i1})(a_{i2}\oplus_3 b_{i1})\nonumber\\
&-(a_{i1}b_{i1}\oplus_3 a_{i2}b_{i1}) \ . \nonumber
\end{align} 
Finally, step $(b)$ follows from exploiting that both variable $\sum\limits_{i=1}^m a_{i1}^2\oplus_3 b_{i1}^2$ and variable $\sum\limits_{i=1}^m a_{i2}^2\oplus_3 b_{i1}^2$ reside in $\mathbb{F}_3$.

\bibliographystyle{IEEEtran}
\bibliography{derya}

\end{document}